\documentclass[a4paper,USenglish,cleveref, autoref, thm-restate]{lipics-v2021}

\hideLIPIcs  

\title{Communication Requirements for Linearizable
Registers} 

\author{Ra\"issa Nataf}{Technion, Israel} {raissa.nataf@cs.technion.ac.il}{https://orcid.org/0009-0003-1127-754X}{}

\author{Yoram Moses}{Technion, Israel} {moses@ee.technion.ac.il}{https://orcid.org/0000-0001-5549-1781}{}

\authorrunning{R.Nataf, Y.Moses}
\Copyright{Ra\"issa Nataf and Yoram Moses}

\ccsdesc[100]{Theory of computation~Distributed algorithms} 

\keywords{linearizability, atomic registers, asynchrony,  message chains, real time} 

\category{} 

\relatedversion{} 

\funding{Yoram Moses is the Israel Pollak academic chair at the Technion.  This work was supported in part by the Israel Science Foundation under grant ISF 2061/19.}

\acknowledgements{We thank Gal Assa, Naama Ben David, and an anonymous referee  for very useful comments that improved the presentation of this paper. We alone are responsible for any errors or misrepresentations.
This is a slightly modified version of a paper \cite{commRequirements2024} with a similar title that appeared in DISC 2024.}

\nolinenumbers 

\usepackage{algorithm2e}
\usepackage{algpseudocode}
\usepackage{amsmath}
\usepackage{amssymb}
\usepackage{amsbsy}
\usepackage{amsthm}
\usepackage{eucal}
\usepackage{bm}
\usepackage{float}
\usepackage{apxproof}

\usepackage{hyperref} 

\algnewcommand\algorithmicforeach{\textbf{for each}}
\algdef{S}[FOR]{ForEach}[1]{\algorithmicforeach\ #1\ \algorithmicdo}
\algdef{SE}[EVENT]{Event}{EndEvent}[1]{\textbf{upon event}\ #1\ \algorithmicdo}{\algorithmicend\ \textbf{event}}%
\algtext*{EndEvent}
\usepackage{calc} 

\usepackage{xcolor}
\usepackage{thmtools}
\newcounter{casenum}

\newcommand{\mc}{\rightsquigarrow}
\newcommand{\opmc}{\boldsymbol{\mc}}

\newcommand{\node}[1]{\langle#1\rangle}

\usepackage[all]{nowidow} 
\newcommand{\shift}[1]{\mathtt{shift}_\Delta[#1]}

\newcommand{\R}{\mathtt{R}}
\newcommand{\W}{\mathtt{W}}
\newcommand{\X}{\mathtt{X}}
\newcommand{\Op}{\mathtt{O}}
\newcommand{\Xa}{\mathtt{X} a}
\newcommand{\Y}{\mathtt{Y}}
\newcommand{\Yb}{\mathtt{Y} b}
\newcommand{\Z}{\mathtt{Z}}
\newcommand{\defemph}[1]{\textbf{\textit{#1}}}
\newcommand{\modelf}{\gamma^f}
\newcommand{\larp}{{\sf l.a.r.p.\/}}

\newcommand{\act}{\alpha}
\newcommand{\RP}{R_Q}
\newcommand{\loc}{\approx}
\newtheorem{notation}{Notation}
\newcommand{\ret}{\mathtt{return}}
\newcommand{\inv}{\mathtt{invoke}}
\newcommand{\move}{\mathtt{move}}
\newcommand{\skipp}{\mathtt{skip}}
\newcommand{\del}{\mathtt{deliver}}
\newcommand{\past}{\mathsf{past}}
\newcommand{\chan}[1]{\mathsf{chan}_{#1}}
\EventEditors{Dan Alistarh}
\EventNoEds{1}
\EventLongTitle{38th International Symposium on Distributed Computing (DISC 2024)}
\EventShortTitle{DISC 2024}
\EventAcronym{DISC}
\EventYear{2024}
\EventDate{October 28--November 1, 2024}
\EventLocation{Madrid, Spain}
\EventLogo{}
\SeriesVolume{319}
\ArticleNo{20}
\begin{document}

\maketitle

\begin{abstract}
While linearizability is a fundamental correctness condition for distributed systems, ensuring the linearizability of implementations can be quite complex. 
An essential aspect of linearizable implementations of concurrent objects is the need to preserve the real-time order of operations. In many settings, however, processes cannot determine the precise timing and relative real-time ordering of operations. Indeed, in an asynchronous system, the only ordering information available to them is based on the fact that sending a message precedes its delivery. We show that as a result, message chains must be used extensively to ensure linearizability. 
This paper studies the communication requirements of linearizable implementations of atomic registers in asynchronous message passing systems. 
We start by proving two general theorems that relate message chains to the ability to delay and reorder actions and operations in an execution of an asynchronous system, without the changes being noticeable to the processes. These are then used to prove that linearizable register implementations must create extensive message chains among operations of \emph{all} types. In particular, our results imply  that linearizable implementations in asynchronous systems are necessarily  costly and nontrivial, and provide insight into their structure. 
%
\end{abstract}
\section{Introduction}

Linearizability \cite{HerlihyLineari} is a fundamental correctness criterion and is the gold standard for concurrent implementations of shared objects. Informally, an object implementation is linearizable if in each one of its executions, operations appear to occur instantaneously, in a way that is consistent with the execution and the object's specification. Linearizable implementations have been developed for a variety of concurrent objects 
 \cite{afek1993atomic,michael1996simple,herlihy2020art} and is also widely used in the context of {\em state-machine replication} (SMR) mechanisms \cite{SMR1990Schneider,SMReurosys2020,SwiftPaxos}.
 Understanding the costs that linearizable implementations imply and optimizing their performance is thus crucial.  Lower bounds on linearizable implementations are rare in the literature. Our paper makes a significant step towards capturing inherent costs of linearizability in the important case of linearizable register implementations, and provides a new formal tool for capturing the necessary structure of communication in register implementations.
 
In an execution of a linearizable implementation, the actions performed and values observed by processes  depend on the real-time ordering of non-overlapping operations \cite{HerlihyLineari}. However,  processes do not have direct access to real time in the asynchronous setting, and this makes  satisfying linearizability especially challenging.    
 The only way processes can obtain information about the real-time order of events in asynchronous message-passing systems is via {\em message chains} ({\em cf.}\ Lamport's {\em happens before}  relation \cite{Lam78causal}). 
Roughly speaking, a message chain connects process $i$ at (real) time $t$ and process $j\neq i$ at $t'$ if 
there is a sequence of messages starting with a message sent by~$i$ at or after~$t$, ending with a message received by~$j$ no later than time~$t'$, such that every message is received by the sender of the following message in the sequence, before the following message is sent.\footnote{A formal definition appears in \Cref{sec:model2.2}.} Message chains can be used to ensure the relative real-time order of events. 
Moreover, as we formally show, in the absence of a message chain relating events at distinct processes, there can be no way to tell what their real-time order is. 
  This paper establishes the central role that message chains must play in achieving linearizability in an asynchronous system.

 Registers constitute a central abstraction in distributed computing. In their seminal paper \cite{ABD}, Attiya, Bar-Noy and Dolev  provide a linearizable implementation of single-writer multi-reader (SWMR) registers in an asynchronous message passing model where processes are prone to crash failures. This implementation was extended to the multi-writer multi-reader (MWMR) case in~\cite{MWMRLynch}. Since then, there has been significant interest in implementing registers in asynchronous message passing models. 
 In \cite{ABD}, quorum systems are used to guarantee a message chain between {\em every pair} of non-overlapping operations. This is costly, of course, both in communication and in execution time. Is it necessary?
 
In a linearizable implementation of a MWMR  register, every process can issue reads and writes, and a read should intuitively return the most recent value written. It is to be expected that a reader must be able to access previous write operations, and especially the one whose value its read operation returns. But should writing a new value, for example, require message chains from all previous reads and writes? Must a process that has read a value communicate this fact to others?
 Interestingly, we show in this work that typically, the answer is {\em yes}. Moreover, we prove that 
 every operation of a {\em fault-tolerant} implementation of a MWMR register must communicate with a quorum set before it~completes.
 \\
 
  The main contributions of this paper are 
  \begin{enumerate}
   \item We show that in a linearizable implementation of a register in an asynchronous setting, every operation, regardless of type, might need to have a message chain to arbitrary operations in the future. Moreover, in an $f$-resilient implementation, before a process can complete an operation, it must construct a round-trip message chain interaction with nodes in a quorum set of size greater than~$f$. 
      These requirements apply to {\em every} execution and thus, provide a natural way for establishing lower bounds on the performance of register implementations and related applications not only in the worst case, but also in optimistic executions (a.k.a.\ fast paths) \cite{SwiftPaxos,FastSlow,fastAtomicRegister04}.
      We expect this work to serve as a tool for analyzing the efficiencies of existing implementations and also as a guide for implementing new linearizable objects in the future.
      
      \item We show these results by formulating and proving two useful and general theorems about coordination in asynchronous systems. One relates message chains to the ability to delay particular actions in an execution of an asynchronous system for an arbitrary amount of time, without the delay being noticeable to any process in the system. The other relates them to the ability to change the relative real-time order of operations on concurrent objects in manners that may cause violations of linearizability requirements.

     \end{enumerate}
    Interestingly, a significant amount of communication in a linearizable implementation is required for timing purposes, rather than for transferring information about data values. 
    Our results apply verbatim if message passing is replaced by communication via asynchronous single-writer single-reader (SWSR) registers or in hybrid models (\cite{HybridNaama}, \cite{hadzilacos2022atomic}) for a suitably modified notion of message chains.  They also extend to other variants of linearizability, such as strict linearizability \cite{aguilera2003strict}.

This paper is structured as follows:
\Cref{sec:related} presents related work.
In \Cref{sec:model} we present the model and preliminary definitions and results about message chains, real time ordering and the local equivalence of runs. In \Cref{sec:pastlemma} we prove a theorem about the ability to delaying actions in a way that processes cannot notice. This is used in \Cref{sec:operations} to show that certain operations 
can be reordered in a run, in a similar fashion. These results, which can be applied to 
arbitrary objects, are next used for the study of atomic register implementations.  \Cref{sec:registers} contains definitions of registers  and linearization in our setting.  \Cref{sec:XaYb}  provides 
 general results showing the need for  message chains between operations in executions of linearizable register implementations. In  \Cref{sec:quorumsandfailures} we show how the presence of failures combined with the  results of the previous sections imply the necessity of using quorum systems.

\section{Related Work}\label{sec:related}
Attiya, Bar-Noy and Dolev's paper (ABD) \cite{ABD} shows
how to implement shared memory via message passing in an asynchronous message passing model where processes are prone to crash failures. Their algorithm (which we shall call ABD) is $f$-resilient and makes use of quorum systems. Each write or read operation performs two communication rounds. In each communication round by $p$, process $p$ sends messages to all $n$ processes and waits for replies  from $n-f$ processes before it proceeds to the next communication round.

In \cite{fastAtomicRegister04} and \cite{SemiFastRegPODC20}, Dutta \emph{et al.}\ and Huang \emph{et al.}, respectively, consider a model consisting of disjoint sets of servers, writers and readers and where at least one process can fail ($f\geq 1$). 
They study implementations of an atomic register where  read or write operations are {\em fast}, by which they mean that the operations terminate following a single communication round.
In \cite{fastAtomicRegister04}, an SWMR register implementation is provided with fast  reads and writes, if the number of readers is small enough relative to the number of servers and the maximal number of failures.
They also  prove that MWMR register implementations with both fast read and fast writes are impossible. \cite{SemiFastRegPODC20}  
proves that 
implementations with fast writes are impossible and by showing under which conditions (on the number of failures) implementations with fast reads exist. The models of \cite{fastAtomicRegister04,SemiFastRegPODC20} assume crash failures. Our results in  Sections~\ref{sec:pastlemma}-\ref{sec:XaYb} are valid both when processes are guaranteed to be reliable (no failures) and in the presence of crash failures.

In \cite{naserpastoriza2023} Naser-Pastoriza {\em et al.\/} consider networks where channels may disconnect. As one of that paper's main contributions, it establishes minimal connectivity requirements for linearizable implementations of registers in a crash fault environment where channels can drop messages. Informally, it is shown that (1) all processes where obstruction-freedom holds must be strongly connected via correct
channels; and (2). If the implementation tolerates $k$ process crashes and $n=2k+1$, then any process where obstruction-freedom holds must belong to a set of more than $k$ correct processes strongly connected by correct channels. 
 
 The works \cite{QuorumDetector2009,delporte2004weakest, delporte2008weakest} show  that {\em quorum} failure detectors, introduced by Delporte-Gallet \emph{et al.} in~\cite{delporte2004weakest}, are the weakest failure detectors enabling the implementation of an atomic register object in asynchronous message passing systems. This  class of failure detectors capture the minimal information regarding failures that processes must possess to in linearizable implementation of registers.

Variants of linearizability that differ in the way crashes are handled  have been defined in the context of NVRAMs; see  Ben-David \emph{et.~al} \cite{ben2022survey} for a survey.  Another important
variant is {\em strong} linearizability,  introduced by Golab \emph{et.~al} in \cite{Golab2011Strong}. They showed that in a randomized algorithm, executions behave exactly as if using atomic objects if and only if the implementation is  strongly linearizable. Attiya \emph{et.~al} \cite{AEW2021} proved that multi-writer registers do not have strongly linearizable nonblocking implementations in message-passing systems. In \cite{hadzilacos2022atomic}, Hadzilacos \emph{et.~al} showed that the ABD implementation of an atomic SWMR register in message passing systems is not strongly linearizable. Finally, Chan \emph{et.~al} \cite{Chan21} proved that single-writer registers do not have strongly linearizable nonblocking implementations in message-passing systems. 
Our results in \Cref{sec:pastlemma,sec:operations,sec:registers} are valid in asynchronous models in general and can thus also be used in the analysis and the study of such variants of linearizability.




\section{Model and Preliminary Definitions} \label{sec:model}
\subsection{Model}
While asynchronous systems are often captured using an interleaving model, 
we adopt the asynchronous message passing model from \cite{FHMV03}, in which several events can take places at the same time. This facilitates reasoning about the time at which actions and operations occur, and analyzing the possibility of modifying the timing of some operations while leaving the timing of other operations unchanged.   
We briefly describe the model here and refer the reader to \Cref{sec:detailedmodel} for the complete detailed model. The detailed model is required mainly for the proof of \Cref{thm:paslemma2.1} which is lays the technical basis for most of our analysis.  

We consider an asynchronous message passing model with $n$ processes, connected by a communicated network, modeled by a directed graph where an edge from process~$i$ to process~$j$ is called a {\em channel}, and denoted by $\chan{i,j}$. The environment, which plays the role of the adversary, is in charge of scheduling processes, of delivering messages, and of invoking operations (such as reads, writes etc.) at a process. 
A run  of the system is an infinite sequence $r\,=\,r(0), r(1),\ldots$ of global states, where each global state $r(m)$ determines a local state for each process, denoted by $r_i(m)$. We identify time with the natural numbers, and consider $r(m)$ to be the system's state at time~$m$ in~$r$.  For ease of exposition, we assume that messages along a channel are delivered in FIFO order. Moreover, we assume that the local state of a process~$i$  keeps track of of the events it has been involved in so far: all actions it has performed, all messages it sent and received, and all operations invoked at~$i$, up to the current time. 
Asynchrony of the system is captured by assuming that processes moves, message deliveries  and operation invocation are scheduled  in an arbitrary nondeterministic order. Thus messages can take any amount of time to be delivered, and processes can refrain from performing moves for arbitrarily long time intervals. 
We consider actions to be performed in {\em rounds}, where round~$m$ occurs between time~$m$ and time~$m+1$. The transition from $r(m)$ to $r(m+1)$ is based on the actions performed by the environment and by all processes that move in round $m+1$. 

A process $i$ is said to be correct in $r$ if it is allowed to move (by the environment)  infinitely often in~$r$. Otherwise process $i$ is {\em faulty} (or \emph{crashes}) in $r$. 
We say that a message $\mu$ is {\em lost} in~$r$ if it is sent in~$r$ and never delivered. 
A system is said to be \emph{reliable} if no process ever fails and no message is ever lost, in any of its runs. Finally, a protocol is said to be \emph{$f$-resilient} if it acts correctly in all runs in which no more than~$f$ processes are faulty.

\begin{toappendix}
\section{Detailed Model}\label{sec:detailedmodel}

Our model is based on the asynchronous message passing model of Fagin \emph{et al.} [9].   We consider  a set $\Pi$ of $n$ processes.  connected via a communication network $Net$, defined by a directed graph $(\Pi,E)$.  Every edge $(i,j)\in E$  is associated with a {\em channel}, and denoted by $\chan{ij}$, consisting of a set of records of the form $|\mu, t|$. Such a record represents the fact that~$\mu$ was sent by $i$ to $j$ in round $t$ and is still in transit (i.e., it has not been delivered yet). In addition to the processes, the {\em environment} (denoted by $e$) models what is commonly referred to as the adversary. 

\begin{itemize}
    \item \textbf{Actions}: For $i\in \Pi$, the set of actions $Act_i$ it can perform consists of local actions $\alpha_i$ (possibly including a $no\_op_i$ action) and message send actions of the form $send_i(\mu,j)$. 
The environment actions will play the role of determining when messages are delivered, when processes are scheduled to move and external inputs to individual processes. 
Thus, an environment action $\alpha_e$ is a tuple $\vec{\eta}=\langle \eta_1,...,\eta_n\rangle$ containing a component $\eta_i$ for every process $i$.  
Each $\eta_i$ is either $\move_i$, $\skipp_i$, $\inv_i(x)$, or $\del_i(|\mu, t|, j)$ for some message~$\mu$ and process $j \neq i$. The $\move_i$ action means that $i$ performs an action according to its protocol as defined below;  $\skipp_i$ means that~$i$ is ignored in the current round; $\inv_i(x)$ means that~$i$ will receive the external input~$x$, while $\del_i(|\mu, t|, j)$ means that~$i$ will  receive the message $\mu$ from~$j$, provided that this message is in transit, and is the next message to be delivered in FIFO order. 
\item \textbf{States}: 
The local state of a process at any given point is its local event history $h_i$, containing an initial value and a sequence of local events: all the messages $i$ received, external inputs and all the actions $i$ performed, all arranged in the order in which~$i$ observed them.  The local state of the environment contains the complete record of all actions performed so far, as well as the current contents of $\chan{ij}$ for all network edges $(i,j)\in E$. 
A global state is a tuple $g=(\ell_e,\ell_1,\ldots,\ell_n)$ containing a state for the environment and a state for each one of the processes. An {\em initial} global state is one in which all local states contain only the initial values.
\item \textbf{Protocols}: 
A {\em protocol} $P_i$  
associates a nonempty set of actions (of $Act_i$) with every local state of the process~$i$.
If $P_i(\ell_i) = S$, then the action performed by $i$ when scheduled to move in state $\ell_i$ will be one of the elements of $S$. A protocol for the processes  has the form $P = (P_1,\dots, P_n)$, and it associates a protocol $P_i$ with every process $i\in \Pi$.

\item \textbf{Environment Protocol}: The environment’s protocol, which we denote by $P_e^a$, is given by $P_e^a(\ell_e)\triangleq \{\vec{\eta}:\vec{\eta}\in Act_e\}$. In words, $P_e^a(\ell_e)$ performs, for every process $i \in P$, an independent, nondeterministic choice of $\eta_i$ among the possibilities of $\move_i$, $\skipp_i$, $\del_i(\cdot, \cdot)$ or $\inv_{i}(\cdot)$.
 \item \textbf{Transition Function}: A  joint action is a tuple $(\vec{\eta},\alpha_1,\ldots,\alpha_n)$ with $\vec{\eta}\in Act_e$ and $\alpha_i\in Act_i$ for each $i\in \Pi$. The transition function modifies the environment’s local state~$\ell_e$ by appending the current round’s joint action at the end of the event history $h$. Local states are transformed as follows: If $\eta_i=\move_i$ then the action $\alpha_i\in P_i(\ell_i)$ that $i$ performs (as recorded in the joint action added to $h$) is appended at the end of~$h_i$. 
Moreover, if $\eta_i=\move_i$ and $i$’s action is $send_i(\mu,j)$ where $(i,j)$ is a link in $Net$, then a record $|\mu,m|$, where $m$ is the current (sending) round, is added to $\chan{ij}$. 
    Similarly, if $\eta_i=\inv_i(\cdot)$, then 
    this external input is appended at the end of $h_i$. If~$\eta_i=\del_i(|\mu,t|,j)$ and $|\mu,t|$ is the oldest message in~$\chan{ji}(m)$ (the message~$\mu$ was sent in round~$t$ by $j$, is still in transit at the current time~$m$, and is the next message to be delivered according to FIFO order), then this record $|\mu,t|$ is removed from $\chan{ji}$ and $(j,\mu)$ is appended to the end of~$h_i$. 
    In this case, $\mu$ is said to be delivered in round $m$ in $r$. The local state $\ell_i$ of $i \in P$ remains unchanged if $\eta_i=\skipp_i$ or if $\eta_i=\del_i(|\mu, t|, j)$ and $|\mu, t| \notin \chan{ji}(m)$, and the $i$-component in the joint action~$\vec{\alpha}_m$ performed at time $m$ is ‘$\bot_i$’.
 \item \textbf{Runs}: A run $r$ is an infinite sequence of global states, whose first element $r(0)$ is the initial global state and we use $r(m)$ to denote the ($m+1$)th state in the sequence. We identify time with the natural numbers, and think of $r(m)$ as being  the global state at time $m$ in~$r$. We denote by $r_i(m)$ the local state of process $i$ in $r(m)$.

\item \textbf{Run of a protocol~$P$}:    A run $r$ is called a run of $P$ if \begin{itemize}
    \item[(i)] $r(0)$ is an initial global state, and 
    \item[(ii)] for every $m\ge 0$, 
there is a joint action $\vec{\alpha}=(\vec{\eta},\alpha_1,\ldots,\alpha_n)$ with $\vec{\eta}\in Act_e$ and $\alpha_i\in P_i(r_i(m))$ for every $i=1,\ldots,n$ such that $r(m+1)$ is obtained  by applying the transition function to $\vec{\alpha}$ and $r(m)$.  
\end{itemize}
\end{itemize}
\vspace{-1mm}
A protocol $P_i$ is deterministic if it always specifies a unique action, i.e., if $P_i(\ell_i)=S$ then $|S|=1$. We remark that while processes may or may not follow a deterministic protocol, the environment's protocol is highly nondeterministic. The asynchronous aspect of an a.m.p. 
is captured mainly by the environment’s protocol and the transition function: The environment can delay the delivery of a message for arbitrarily long, and it can similarly delay a process from taking a step, and this is independent of how many steps others take, and of whether they receive messages sent to them. Moreover, the transition function is such that a process’ local state changes only if the process either receives a message, takes a step, or that the environment's action is an invocation. Thus, it has no way of telling whether and how much time has passed since its last move. 

\noindent{\bf Crashes and loss of messages.} A process $i$ is said to be correct in $r$ if it is allowed to move ($\eta_i=\move_i$)  infinitely often in~$r$. Otherwise process $i$ is {\em faulty} (or \emph{crashes}) in $r$. 
We say that a message $\mu$ is {\em lost} in~$r$ if it is sent in~$r$ and never delivered.

A system is said to be \emph{reliable} if no process ever fails and no message is ever lost, in any of its runs. A protocol is said to be \emph{$f$-resilient} if it acts correctly in all runs in which no more than~$f$ processes are faulty. 
\end{toappendix}
\subsection{Message Chains,  Real-time Ordering and Local Equivalence}\label{sec:model2.2}
As stated in the introduction, the real-time order of events in a system plays a central role in linearizable protocols. The main source of information about the order of events in asynchronous systems are message chains. 
We denote by $\theta=\node{p,t}$ a {\bf process-time} pair (or a {\em node}) consisting of the process $p$ and time $t$.
    Such a pair is used to refer to the point on~$p$'s timeline at real time~$t$.
We can inductively define a message chain between nodes of a given run as follows. 


\begin{definition} [Message chains] \label{def:msg-chain}
    There is {\bf a message chain} from~${\theta}=\node{p,t}$ to 
 ${\theta'}=\node{q,t'}$ in a run~$r$, denoted by ${\theta\mc_{r}\theta'}$, if
 \begin{itemize}
     \item[(1a)] $p=q$ and $t<t'$, 
     \item[(1b)] $p$ sends a message to~$q$ in round $t+1$ of~$r$, which arrives no later than in round $t'$, or
     \item[(2)] there exists $\theta''$ such that $\theta\mc_{r}\theta''$ and $\theta''\mc_{r}\theta'$.
  \end{itemize}
\end{definition}

Lamport calls `$\mc_r$' the \emph{happens before} relation \cite{Lam78causal}. As we now show, the existence of message chains indeed implies real-time ordering. 
We write $\theta<_r\theta'$ if~$\theta=\node{p,t}$ and~$\theta'=\{q,t'\}$ are nodes in~$r$ and $t<t'$. An immediate implication of \Cref{def:msg-chain} is 
\begin{observation}\label{obs: mc and rt}
    If $\theta\mc_r\theta'$ then $\theta<_r\theta'$.
\end{observation}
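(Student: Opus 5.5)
The observation follows by structural induction on the derivation of $\theta\mc_r\theta'$ in \Cref{def:msg-chain}. Since the relation is defined as the least relation closed under clauses (1a), (1b) and (2), it suffices to show that the relation $<_r$ (comparing times only) holds in each base case and is preserved by the transitivity clause. Equivalently, one can induct on the number of applications of clause (2) used to witness the chain.

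In case (1a) we have $p=q$ and $t<t'$, so $\theta<_r\theta'$ holds immediately. In case (1b), $p$ sends a message in round $t+1$, and it is delivered no later than round $t'$. Delivery cannot occur in the same round as the send: by the transition function in \Cref{sec:detailedmodel}, the record $|\mu,t+1|$ is added to $\chan{pq}$ only at the end of round $t+1$. A $\del_q$ action in round $m$ succeeds only if the record is already in the channel at time $m$, that is, $m\ge t+2$. Hence $t'\ge t+2>t$. This step is the only one needing care, since it depends on the model's convention that a message sent in a round is not deliverable in that same round; even without this convention, $t'\ge t+1>t$ still holds.

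For clause (2), suppose $\theta\mc_r\theta''$ and $\theta''\mc_r\theta'$ with $\theta''=\node{s,t''}$. By the induction hypothesis $t<t''$ and $t''<t'$. Transitivity of $<$ on the naturals then gives $t<t'$, that is, $\theta<_r\theta'$, which completes the induction.
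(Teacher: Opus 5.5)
Your proof is correct and is essentially the paper's argument: induction on the number of uses of clause (2), with (1a) immediate, (1b) holding because a message sent in round $t+1$ cannot arrive before round $t+1$, and (2) following by transitivity of $<$ on times. Your sharper bound $t'\ge t+2$ from the channel semantics is fine but unnecessary, since $t'\ge t+1>t$ already suffices, which is the bound the paper uses.
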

\begin{proof}
   Let $\theta=\node{p,t}$ and  
 $\theta'=\node{q,t'}$. The proof is by induction on the minimal number of applications of step (2) in \Cref{def:msg-chain} needed to establish that $\theta\mc_r\theta'$.  If $\theta\mc_r\theta'$ by (1a) then $t<t'$. Similarly, if it is by (1b), then $t<t'$ because a message sent in round~$t+1$ can only arrive in a round $t'\ge t+1>t$. Finally, if $\theta\mc_r\theta'$ by clause (2), then for some node~$\theta''=\node{p'',t''}$ we have that $\theta\mc_{r}\theta''$ and $\theta''\mc_{r}\theta'$, where, inductively, $t<t''$ and $t''<t$. It follows that $t<t'$, as required. 
\end{proof}
The converse is not true: It is possible for~$\theta$ to appear before~$\theta'$ in real time, without a message chain between them. As we shall see, however, in the absence of a message chain, processes will not be able to detect the ordering between the nodes.


Roughly speaking, the information available to a process at a given point is determined by its local state there. A process is unable to distinguish between runs in which it passes through the same sequence of local states. 
We will find it useful to consider when two runs cannot \emph{ever} be distinguished by any of the processes. 
Formally:
\begin{definition}[Local Equivalence]
    Two runs $r$ and $r'$ are called \emph{locally equivalent}, denoted by  $r\loc r'$,  if for every process $j$, a local state $\ell_j$ of~$j$ appears in~$r$ iff $\ell_j$ appears in~$r'$. 
\end{definition}
Recall that the local state of a process~$i$ consists of its local history so far. Consequently, an equivalent definition of local equivalence is that if two runs are locally equivalent, then every process starts in the same state, performs the same actions and sends and receives the same messages, all in the same order, in both runs.

A node $\theta=\node{i,t}$ of~$i$ in~$r$ is said to \emph{correspond} to node $\theta'=\node{j,t'}$ of $r'$,  denoted by $\theta\sim\theta'$, if $i=j$ (they refer to the same process) and the process has the same local state at both (i.e., $r_i(r)=r'_i(t')$). We will make use of the following properties of local equivalence (the proof of \Cref{lem: loc_eq} appears in the Appendix): 
\begin{restatable}{lemma}{primelemma}\label{lem: loc_eq}
    Let $r$ and $r'$ be two runs such that $r\loc r'$. Then  
\begin{enumerate}[(i)]
    \item If $\theta_1\mc_r\theta_2$ then $\theta_1'\mc_{r'}\theta_2'$ holds for all nodes $\theta_1'$ and $\theta_2'$  of~$r'$ such that  $\theta_1\sim\theta'_1$ and~$\theta_2\sim\theta'_2$ 
    \item If $r$ is a run of protocol $P$, then $r'$ is also a run of $P$
    \item A process~$i$ fails in~$r$ iff it fails in~$r'$, and similarly
     \item A message~$\mu$ is lost in~$r$ iff the same message is lost in~$r'$
\end{enumerate}
\end{restatable}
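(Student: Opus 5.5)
The basic observation behind all four parts is that a local state is a complete local history. Two runs are locally equivalent exactly when every process passes through the same sequence of distinct local states in both. Each part is then proved by reading the relevant information off local histories.

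For (i), I would induct on the derivation of $\theta_1\mc_r\theta_2$ via \Cref{def:msg-chain}, taking $\theta_1'=\node{p,s}$ and $\theta_2'=\node{q,s'}$ in $r'$ with $\theta_1\sim\theta_1'$ and $\theta_2\sim\theta_2'$.
\begin{itemize}
\item Case (1a): here $p=q$ and $t<t'$, so $r_p(t)$ is a strict prefix of $r_p(t')$, since histories only grow. The corresponding states in $r'$ must therefore be ordered the same way in time, which gives $s<s'$. One subtlety is that the same local state can persist over several times because of skips. Even so, strict prefix order of histories forces the times in $r'$ to be strictly ordered, whichever representative is chosen.
\item Case (1b): the send of $\mu$ by $p$ in round $t+1$ is recorded in $r_p(t+1)$ and is absent from $r_p(t)$. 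The receipt is recorded in $r_q(t')$. Because $r'_p(s)=r_p(t)$, the send is not yet in $p$'s history at $s$, so in $r'$ it occurs in some round after $s$. Because $r'_q(s')=r_q(t')$, the receipt is already in $q$'s history at $s'$, so it occurs by round $s'$. This gives a direct message from a node $\node{p,s''}$ with $s''\ge s$ to $\theta_2'$. Combining it with (1a) on $p$'s timeline via (2) yields $\theta_1'\mc_{r'}\theta_2'$.
\item Case (2): the intermediate node $\theta''$ has a corresponding node in $r'$, since its local state appears in $r'$ by local equivalence. The induction hypothesis applied to both halves then closes the case.
\end{itemize}

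For (ii), I would check the definition of a run of $P$ round by round in $r'$. The joint action of $r'$ in each round is determined by the environment. The only constraint is that each process action $\alpha_i\in P_i(r'_i(m))$ whenever $i$ moves. A move by $i$ in $r'$ extends its history by an action $\alpha$. The same extension occurs in $r$ from the same local state $\ell$, so $\alpha\in P_i(\ell)$ because $r$ is a run of $P$. Deliveries and invocations impose no protocol constraint. The initial state is the same since initial values coincide.

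For (iii) and (iv), I would use the fact that the sequences of local states agree. Failure means $i$ moves only finitely often, which is the same as $i$'s history containing finitely many of its own actions. Since the set of local states of $i$ is identical in $r$ and $r'$, the number of actions $i$ performs is the same in both runs. A message is lost when it is sent, which appears in the sender's history, but its receipt never appears in the receiver's history. Both conditions depend only on the sets of local states, so they transfer from $r$ to $r'$ and back.

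The main obstacle is the bookkeeping in case (1b). There I must argue carefully that prefix relations between histories pin down the rounds of the send and the receipt in $r'$ relative to $s$ and $s'$, while allowing the nodes in $r'$ to be any corresponding ones.
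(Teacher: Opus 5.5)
\textbf{Gap in case (1a).} Your overall route is the paper's: induction on the derivation of $\theta_1\mc_r\theta_2$ for (i), and reading actions, sends and receipts off local histories for (ii)--(iv). Parts (ii)--(iv) are fine, and your (1b) case is more careful than the paper's. The gap is in case (1a). From $t<t'$ you only get that $r_p(t)$ is a prefix of $r_p(t')$, not a strict one. If $p$ takes no step and receives nothing in rounds $t+1,\ldots,t'$ (e.g.\ it is skipped), then $r_p(t)=r_p(t')$. In that situation the conclusion of (i) is actually false, so no argument can close the case. Take $r'=r$ and $\theta_1'=\theta_2'=\node{p,t}$. These correspond to $\theta_1$ and $\theta_2$ respectively, yet $\node{p,t}\mc_r\node{p,t}$ is impossible, because a message chain forces strictly increasing time. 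Replacing ``for all'' by ``for some'' does not help either. Deleting from $r$ a round in which nobody moves or receives anything can yield a run $r'\loc r$ in which this local state of $p$ occurs at a single time. Your remark about states persisting over several times deals with the choice of representatives, not with the two endpoint states coinciding. The paper's own argument has the same blind spot: in its (1a) step it merely exhibits some $\alpha_k'\sim\alpha_k$ without checking that it lies after $\alpha_{k-1}'$.

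\textbf{Repair.} Add the hypothesis that $\theta_1$ and $\theta_2$ are not nodes of the same process carrying the same local state. This covers every use of (i) in the paper, such as transferring $\X\opmc_r\Y$ to $r'$. There, whenever the two endpoint nodes lie on the same process, an invocation or response separates their local states. Under this hypothesis your (1a) argument is correct as written: a strict prefix in $r$ forces $s<s'$ for every choice of representatives, since histories in $r'$ only grow. Case (1b) is unchanged. Case (2) needs one extra line. If the intermediate node $\theta''$ carries the same state as $\theta_1$ (resp.\ $\theta_2$), the induction hypothesis does not apply to that half. But then $\theta_1'$ (resp.\ $\theta_2'$) is itself a correspondent of $\theta''$, so applying the hypothesis to the other half suffices. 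Note that $\theta''$ cannot share its state with both endpoints, since then $\theta_1$ and $\theta_2$ would carry the same state.
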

\begin{toappendix}
\\\\
We now restate and prove \Cref{lem: loc_eq}.

\primelemma*

\begin{proof}
We prove each one of the claims.
\begin{enumerate}[(i)]
    \item
    Let $\theta_1,\theta_2$ in $r$ such that $\theta_1\mc_r\theta_2$. 
     Denote by $\theta_1=\alpha_1,\alpha_2,\dots,\alpha_k=\theta_2$ the nodes constituting this message chain such that $\alpha_{i+1}$ is obtained from $\alpha_{i}$ applying (1a) or (1b) of \Cref{def:msg-chain}.
    We prove by induction on~$k$ that $\alpha_k$ has a corresponding message chain $\alpha_1'\mc_{r'}\alpha_k'$ in $r'$ when $\alpha_1\sim\alpha_1'$ and $\alpha_k\sim\alpha_k'$.
    
    Base: $k=1$. The base case results directly from the fact that every local state in $r$ appears in $r'$ and vice-versa. Thus there is $\alpha_1'\sim\alpha_1$ in $r'$.

    Step: Let $k>1$ and assume inductively that the claim holds for $\theta_1=\alpha_1,\alpha_2,\ldots,\alpha_{k-1}$.  If $\alpha_{k}$ is obtained from $\alpha_{k-1}$ by (1a) of \Cref{def:msg-chain}, then let $\alpha'_{k-1}=\node{p,t_{k-1}}$ be the node of~$r'$ such that $\alpha_{k-1}\sim\alpha'_{k-1}$. By local equivalence between~$r$ and~$r'$ there must be a node $\alpha'_k\sim\alpha_k$ of~$p$ in~$r'$, and the claim holds. 
    
     If $\alpha_{k}$ is obtained from $\alpha_{k-1}$ by (1b), meaning that a message is sent at node $\alpha_{k-1}$ and arrives no later than at $\alpha_{k}=\node{p,t_k}$, then since the send and the delivery of messages are registered in processes local states, we have by definition of locally equivalence that this message is also sent at $\alpha'_{k-1}$ and arrives no later than a node  $\alpha'_{k}\sim\alpha_k$ of~$p$ in~$r'$.

    \item Assume $r$ is a run of $P$. We show that if action $\alpha_i$ is performed in $r'$ then it is an action of $P$. Let $\alpha_i'$ be an action performed by $i$ in $r'$ and denote by $l_i$ the state of $i$ right after performing this operation . By definition of local equivalence, there is a point in $r$ such that $i$ has local state $l_i$ in $r$. I.e., $i$ performed action $\alpha_i'$ in $r$, which by assumption is a run of $P$.
    \item If a process $i$ fails in $r$, then there is a time $t$ from which the environment action of $i$ is not $\move_i$ anymore. Thus, there is a finite number of actions registered to its local state along the run. By definition of local equivalence, this is the case also for $r'$, i.e., $i$ fails also in $r'$.
    \item Let $\mu$ be a message sent by $i$ to $j$ at node $\theta$ in $r$. It follows from item (i) that $\mu$ is sent at some node $\theta'\sim\theta$ in $r'$. If $\mu$ is lost in $r$, then this message is never delivered to process $j$ and thus this reception is never added to the local state of $j$. By item (i) it follows $\mu$ is also lost in $r'$.
\end{enumerate}
    
\end{proof}
\end{toappendix}




\section{Delaying the Future while Maintaining the Past}\label{sec:pastlemma}


We are now ready to state and prove the main theorem that will allow us to capture the subtle interaction between message chains and the ability to reorder operations in an asynchronous system. 
\begin{definition}[The past of~$\theta$]\label{def:past}
For a node~$\theta$ in a run~$r$, we define ~$\,\past_r(\theta)~\triangleq~\{\theta'\,|\;\theta'\mc_r\theta\}$.  
\end{definition}

 Chandy and Misra have already shown that, in a precise sense, in an asynchronous system, a process at a given node cannot know about the occurrence of any events   except for ones that appear in its past \cite{ChM}.  Our theorem will show that for any given node $\theta$ in a run~$r$ (which we think of as a ``pivot node'') {\bf all} events that occur outside $\past_r(\theta)$ can be pushed into the future by an arbitrary amount~$\Delta>0$, without any node observing the change. 

\begin{figure}[H]
        \centering
        \includegraphics[width=14cm]{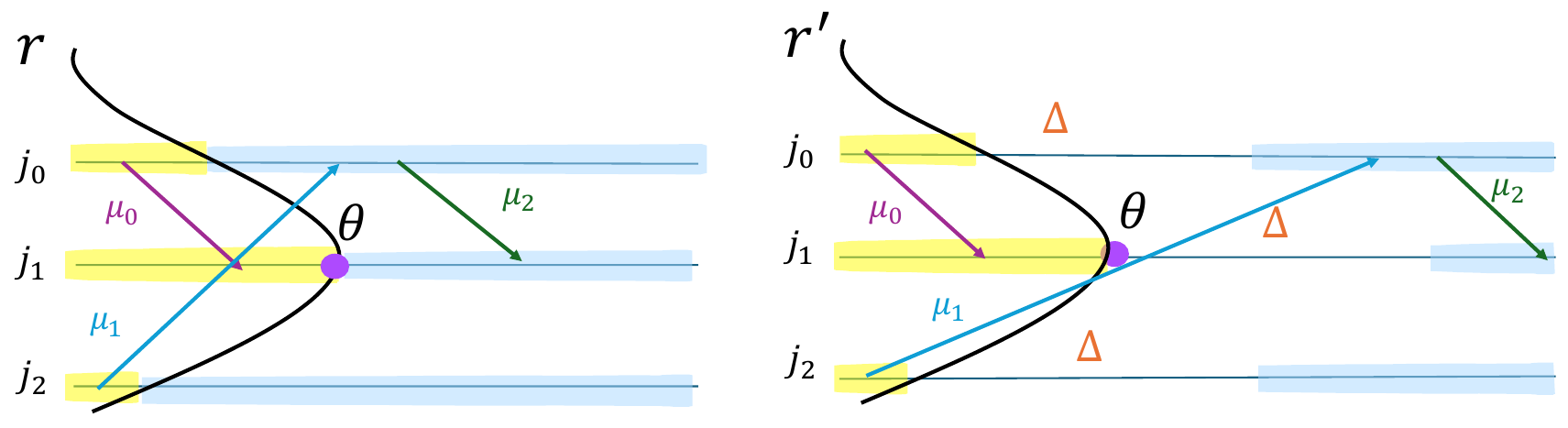}
        \caption{Delaying events by~$\Delta$ relative to the past of a node~$\theta$ (the ``pivot'').}
        \label{fig:pastlemma}
    \end{figure}

\begin{theorem}[Delaying the future]\label{thm:paslemma2.1}
 
    Fix a run~$r$ of a protocol~$P$, a node $\theta=\node{i,t}$, and a delay $\Delta>0$. For each process~$j$  denote by $t_j$ the minimal time $l\geq 0$ such that $\node{j,l}\not\mc_r\theta$ (i.e., $\node{j,t_j}$ is the first point of~$j$ that is not in the past of~$\theta$ in~$r$). Then there exists a run~$r'\approx r$  
    satisfying, for every process~$j$:
                  \begin{equation*}
    	{r}_j(m)=
    	\begin{cases}    		
    		{r'}_j(m)& \textrm{for all~}m\leq t_j\\
    	{r'}_j(m+\Delta) & \textrm{for all~}m\geq t_j+1
    	    	\end{cases}     
    \end{equation*}
    
        
    \end{theorem}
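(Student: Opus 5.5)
The approach is to construct $r'$ explicitly, round by round, by specifying the joint action (environment component $\vec\eta$ together with each process's action) performed in each round of $r'$. The guiding idea: process $j$ behaves in $r'$ exactly as in $r$ up to time $t_j$. It is then frozen, with $\eta_j=\skipp_j$, during rounds $t_j+1,\dots,t_j+\Delta$. Afterwards it replays in round $m+\Delta$ whatever it did in round $m$ of $r$, for every $m\ge t_j+1$. Concretely, for round $k$ of $r'$ and process $j$, we set $\eta'_j$ and $\alpha'_j$ equal to the round-$k$ choices of $r$ if $k\le t_j$, set $\eta'_j=\skipp_j$ if $t_j<k\le t_j+\Delta$, and set them equal to the round-$(k-\Delta)$ choices of $r$ if $k>t_j+\Delta$. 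A delivery $\del_j(|\mu,s|,h)$ is rewritten to $\del_j(|\mu,s'|,h)$, where $s'$ is the sending round of $\mu$ in $r'$ (namely $s$ or $s+\Delta$). The claimed equalities of local states then follow by induction on time, and $r\approx r'$ is immediate from them, since each process passes through the same sequence of local states (just with a stutter of length $\Delta$). By \Cref{lem: loc_eq}(ii), $r'$ is then a run of $P$.

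The key structural fact is that the set $\past_r(\theta)$ is downward closed. If $\node{j,l}\mc_r\theta$ then $\node{j,l'}\mc_r\theta$ for all $l'<l$, by (1a) and transitivity. So the nodes of $j$ in the past form exactly the prefix $\{0,\dots,t_j-1\}$, and this is why $t_j$ is well defined as a threshold. The fact also makes $\past_r(\theta)$ closed under message chains leading into it. In particular, if a message is sent by $h$ in round $s+1$ and delivered to $j$ in round $m\le t_j$, where I take the receiving node to be in the past, then $\node{h,s}\mc_r\node{j,m}$. I need to check carefully the off-by-one between ``node at time $m$'' and ``round $m$''. Intuitively a delivery in round $m$ affects the state at time $m$, which is $\node{j,m}$, and $\node{j,t_j}$ need not be in the past. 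So I will verify that a message whose delivery is recorded at or before $j$'s frozen state ($r_j(t_j)$) was sent at a node $\node{h,s}$ with $s<t_h$. That is, it was sent in a round $\le t_h$, which is unshifted.

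The main obstacle is validity of deliveries in the transition function. In the shifted execution, every message must be in transit, and first in FIFO order, at its new delivery round. I would split into cases on the sender's round $s+1$ and the receiver's round $m$ in $r$. (a) Both are unshifted. Then nothing changes. (b) Both are shifted. Then both move by $\Delta$ and order is preserved. (c) The send is unshifted and the delivery is shifted. Then the message is sent earlier relative to its delivery, which is fine. (d) The send is shifted but the delivery is unshifted. This case is impossible by the closure argument above: the receiving node would be in the past, forcing the sending node into the past.

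For FIFO, I will argue that on each channel the delivery rounds are monotone in $r$ and the map $m\mapsto m$ or $m+\Delta$ (with a single threshold $t_j$) is monotone. Hence relative order is preserved, and the set of messages in transit ahead of any given message in $r'$ is the same set as in $r$. Failed deliveries (the $\bot$ case) and $\inv$ inputs are copied with the same shift and cause no trouble. Finally, the induction on $k$ shows $r'_j(k)=r_j(k)$ for $k\le t_j$, $r'_j(k)=r_j(t_j)$ on the frozen interval, and $r'_j(k)=r_j(k-\Delta)$ after it, which is the statement.
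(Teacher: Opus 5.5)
Your construction is the paper's: freeze $j$ by $\skipp_j$ in rounds $t_j+1,\dots,t_j+\Delta$, replay each later round of $r$ exactly $\Delta$ rounds late, re-label delivery records by the shifted sending round, and induct. You are also more explicit than the paper about whether deliveries are legal; its Claim~2 simply asserts that $\mu$ is in $\chan{hj}$ at time $\shift{m,t_j}-1$.

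\textbf{The gap: case (d) at $m=t_j$.} Case (d) is not yet proved, exactly at the boundary you flagged. A delivery to $j$ in round $m\le t_j$ stays unshifted. Yet for $m=t_j$ the receiving node $\node{j,t_j}$ is \emph{not} in $\past_r(\theta)$, so downward closure says nothing about the sender. The missing idea uses the model's one-event-per-round structure.
\begin{itemize}
\item Case $j\neq i$ with $t_j\ge 1$ (if $t_j=0$ there are no unshifted rounds). Take a chain $\node{j,t_j-1}\mc_r\theta$ and split it into (1a)/(1b) steps. A (1a) first step would land on some $\node{j,l}$ with $l\ge t_j$. This node is not $\theta$, since $j\neq i$, so $\node{j,l}\mc_r\theta$ and hence $\node{j,t_j}\mc_r\theta$, a contradiction. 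So the chain begins with a send by $j$ in round $t_j$. Thus $\eta_j(r,t_j)=\move_j$, and nothing is delivered to $j$ in round $t_j$.
\item Case $j=i$. Here $t_i=t$, by (1a) and irreflexivity of $\mc_r$. A delivery in round $t$ then gives $\node{h,s}\mc_r\theta$ directly.
\end{itemize}
This argument, not closure alone, rules out (d). Indeed, in a model where a process could send and receive in the same round, the statement would fail for large $\Delta$. With this in place, your cases (a)--(c) and the FIFO monotonicity argument go through.

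\textbf{A smaller slip: unsuccessful deliveries.} Copying \emph{unsuccessful} deliveries with the re-labelled record is not harmless. Suppose that in round $m>t_j$ of $r$ the environment names a record $|\mu,s|$ of a message that $h$ sends only later, in a round $s\le t_h$ (hence unshifted) with $m\le s<m+\Delta$. The attempt fails in $r$, but it can succeed in round $m+\Delta$ of $r'$, and then $j$'s histories diverge. The fix is to replace every unsuccessful delivery by $\skipp_j$ in $r'$; an unsuccessful delivery leaves $j$'s state unchanged in $r$ anyway. The paper's construction has the same blemish: its ``otherwise'' subcase assumes that an undelivered $\mu$ must already have been delivered earlier.
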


This theorem lays the technical foundation for most of our analysis in this paper. We start by providing a sketch of its proof, and follow with the full proof. 
\begin{proofsketch}  
Recall that we are given~$r$, $\theta$ and $\Delta$. For every process~$j$ there is an earliest time $t_j$ such that $\node{j,t_j}\notin\past_r(\theta)$.
 We now construct a run~$r'$ that agrees with $r$ on all nodes of~$\past_r(\theta)$. I.e., for every node $\theta'=\node{p,t'}\in\past_r(\theta)$,  then the same actions occur in round~$t'$ on~$p$'s timeline, and $r_p(t')=r'_p(t')$. 
 Moreover, outside of $\past_r(\theta)$ the run~$r'$ is defined as follows. The 
 environment in~$r'$ ``puts to sleep'' every process~$j$ (by performing $\skipp_j$ 
 actions)  for a duration of $\Delta$ rounds starting from round $t_j+1$ and ending in round~$t_j+\Delta$. Every message that, in~$r$, is delivered to~$j$ at a round $m>t_j$ is  delivered $\Delta$ rounds later, i.e., in round $m+\Delta$,  in~$r'$. Similarly, every message sent by~$i$ after time~$t_i$ in~$r$ is sent $\Delta$ rounds later in~$r'$. 
 A crucial property of this construction is that, by definition of~$\mc_r$, if the sending of a message is delayed by~$\Delta$ in~$r'$---the sending node is {\em not} in $\past_r(\theta)$---then its delivery is delayed by~$\Delta$ as well.  Consequently, every message sent in~$r'$ is delivered at a time that is greater than the time it is sent, and so~$r'$ is a legal run.  
 What remains is to check that the run~$r'$ is indeed locally equivalent to~$r$. This careful and somewhat tedious task is performed in the full proof that follows below.
\end{proofsketch}
 
 As illustrated in \Cref{fig:pastlemma}, the run~$r'$ contains a band of inactivity that is~$\Delta$ rounds deep in front of the boundary of $\past_r(\theta)$. Since $\Delta$ can be chosen arbitrarily, \Cref{thm:paslemma2.1} can be used to rearrange any activity that does not involve nodes of $\past_r(\theta)$, even events that may be very early, to occur strictly after~$\theta$ in~$r'$. 
 Crucially, no process is ever able to distinguish among the two runs. 
    
    \begin{proof}[Proof of \Cref{thm:paslemma2.1}]

    To simplify the case analysis in our proof, we define 
    \begin{equation*}
    	\shift{m,t_j} ~\triangleq~
    	\begin{cases}    		
    		m& m\leq t_j\\
    	m+\Delta & m\geq t_j+1
    	    	\end{cases}     
    \end{equation*}
    Notice that the range of $\shift{m,t_j}$ for $m\ge 0$ is the set of times~$m'$ 
     not in the interval $t_j+1\le m'\le t_j+\Delta$.  
     Moreover, observe that $\shift{m-1,t_j}=\shift{m,t_j}-1$ for all $m>0$ such that $m\ne t_j+1$. 
     We shall construct a run ${r'}\approx r$ satisfying, for every process~$j$ and all $m\ge 0$:
                     
                      \begin{itemize}
             
                  \item[(i)]  $r_j(m)={r'}_j(\shift{m,t_j})$ for all $m\ge 0$, and 
                     
                    \item[(ii)] Process~$j$ performs the same actions and receives the same messages in round $m$ of~$r$ and in round $\shift{m,t_j}$ of~${r'}$, for all $m\ge 1$.
        \end{itemize}

We construct ${r'}$ as follows. Both runs start in the same initial state: ${r'}(0)=r(0)$. 
Denote the environment’s action in $r$ in round $m$ by $\eta(r,m)=(\eta_1(r,m),\dots,\eta_n(r,m))$. 
For every process~$j$ the environment's actions $\eta_j$ satisfies  $\eta_j({r'},m')\triangleq\skipp_j$  for all $m'$ in the range $t_j+1\le m'\leq t_j+\Delta$. 
For all $m\ge 0$  we define
\begin{equation*}
	\eta_j({r'},\shift{m,t_j}) ~\triangleq~
	\begin{cases}
		\eta_j(r,m) & ~\hspace{-10mm}\textrm{if~} \eta_j(r,m)\in\{\skipp_j,\move_j,\inv_j(x)\}\\
		\del_j(|\mu,\shift{m_h,t_h}|,h) & \mathrm{if~} \eta_j(r,m)=\del_j(|\mu,m_h|,h)
	\end{cases}       
\end{equation*}

 As for process actions, for all~$j$ and $m> 0$, if $\eta_j({r'},\shift{m,t_j})=\move_j$ and ${r'}_j(m-1)=r_j(m-1)$ then~$j$ performs the same action $\alpha_j\in P_j(r_j(m-1))$ in round $\shift{m,t_j}$ of~${r'}$ as  in round~$m$ of~$r$, and otherwise it performs an arbitrary action from \mbox{$P_j({r'}_j(\shift{m-1,t_j})$} in round $\shift{m,t_j}$ of~${r'}$. 
 Notice that, by definition, all processes follow the protocol~$P=(P_1,\ldots,P_n)$ in~${r'}$.
Moreover, observe the following useful property of~${r'}$:  

\noindent\underline{\tt Claim 1}:\quad ${r'}_j(\shift{m,t_j}-1)={r'}_j(\shift{m-1,t_j})$ for all $m>0$.
\begin{proof}
      We consider two cases: 
\begin{itemize}
\item $m=t_j+1$:\quad Observe that ${r'}_j(t_j+\Delta)={r'}(t_j+\Delta-1)=\dots={r'}_j(t_j)$ since by definition of the run~${r'}$, we have that  $\eta_j({r'},m')=\skipp_j$ for all $t_j+1\leq m'\leq t_j+\Delta$. So, ${r'}_j(\shift{m,t_j}-1)={r'}_j(\shift{t_j+1,t_j}-1)={r'}_j(t_j+1+\Delta-1)={r'}_j(t_j+\Delta)={r'}_j(t_j)={r'}_j(\shift{m-1,t_j})$.
    \item $0<m\ne t_j+1$: 
    If $m\leq t_j$ then by definition of  $\mathtt{shift}_\Delta$ we have  that  \mbox{$\shift{m,t_j}=m$} and $\shift{m-1,t_j}=m-1=\shift{m,t_j}-1$. 
    Similarly, if $m>t_j+1$ then $\shift{m,t_j}=m+\Delta$ and $\shift{m-1,t_j}=m-1+\Delta=\shift{m,t_j}-1$. In both cases we obtain that ${r'}_j(\shift{m,t_j}-1)={r'}_j(\shift{m-1,t_j})$, as desired. \vspace{-3mm}
\end{itemize}
\end{proof}
 
 We are now ready to prove that~${r'}$ is a legal run of $P$ satisfying (i) and (ii). We prove this by induction on~$m\ge 0$, for all processes~$j$. 

 \textbf{Base,} $m=0$: By definition of ${r'}$ we have that ${r'}_j(0)=r_j(0)$.

 \textbf{Step,} $m>0$: Assume inductively that (i) and (ii) hold for all processes~$h$ at all times strictly smaller than~$m$. 
We start by establishing:

\noindent\underline{\tt Claim 2}:\quad
If a message~$\mu$ sent by a process~$h$ at time $m_h$ is delivered to~$j$~~in round~$m$ of~$r$, then~$|\mu,\shift{m_h,t_h}|\in chan_{hj}$ at time $\shift{m,t_j}-1$ of~${r'}$.
\begin{proof}
    Clearly, if ~$\mu$ is delivered to~$j$ in round~$m$ of~$r$ then $\eta_j(r,m)=\del_j(|\mu,m_h|,h)$ for some process $h\ne j$ and round $m_h<m$. 
    By the inductive assumption for $h$ and $m_h<m$, we have that $\mu$ is sent in round $\shift{m_h,t_h}$ of~${r'}$. In addition, by definition of~${r'}$, for all $m'<\shift{m,t_j}$ it holds that $\eta_j({r'},m')\neq \del_j(|\mu,\shift{m_h,t_h},h)$. So $|\mu,\shift{m_h,t_h}|\in chan_{hj}$ at time $\shift{m,t_j}-1$ in~${r'}$.
     
\end{proof}

Recall that we have by the inductive assumption that ${r'}_j(\shift{m-1,t_j})=r_j(m-1)$. Claim 1 thus implies that
\begin{equation}
   ~~~ {r'}_j(\shift{m,t_j}-1)~=~r_j(m-1).
\end{equation}\label{eq:induc}

We can now show that (i) and (ii) hold for~$j$ and~$m$ by cases depending on the environment's action $\eta_j(r,m)$ in round $m$ of $r$: 
\begin{itemize}
	\item[-] $\eta_j(r,m)=\skipp_j$:\quad By definition of $\eta_j$ for ${r'}$, we have that $\eta_j({r'},\shift{m})=\skipp_j$. So, ${r'}_j(\shift{m,t_j})={r'}_j(\shift{m,t_j}-1)=r_j(m-1)$, proving (i).
    Moreover, no action is performed by $j$ neither in $r$ nor in ${r'}$ and no message is delivered to~$j$ in either case, ensuring that (ii) also holds.
    \item[-] $\eta_j(r,m)=\inv_j(x)$: \quad In this case, $\eta_j({r'},\shift{m})=\inv_j(x)$, implying that ${r'}_j(\shift{m,t_j})=r_j(m)$.
	\item[-] $\eta_j(r,m)=\move_j$:\quad In this case, $\eta_j({r'},\shift{m})=\move_j$  by definition of $\eta_j$ for ${r'}$. By (\ref{eq:induc}) we have that ${r'}_j(\shift{m,t_j}-1)=r_j(m-1)$. So by definition of ${r'}$, process~$j$ performs the same action $\alpha_j\in P_j(r_j(m))$ in the round $\shift{m,t_j}$ of ${r'}$ as it does in the round $m$ of $r$. This also ensures ${r'}_j(\shift{m,t_j})=r_j(m)$. In addition, no message is delivered  in round $m$ of~$r$ and none is delivered to it in round~$\shift{m,t_j}$ of ${r'}$.

  \item[-] $\eta_j(r,m)=\del_j(|\mu,m_h|,h)$:\quad In this case, no action is performed by $j$. By definition,  $\eta_j({r'},\shift{m,t_j})=\del_j(|\mu,\shift{m_h,t_h}|,h)$. Recall that by (\ref{eq:induc}) we have ${r'}_j(\shift{m,t_j}-1)=r_j(m-1)$. We now show that $\mu$ is delivered in $r$ in round $m$ iff it is delivered in ${r'}$ in round $\shift{m,t_j}$. 
  \begin{itemize}
      \item If $\mu$ is delivered in round $m$ of $r$ then by Claim 2 we have that $|\mu,\shift{m_h,t_h}|\in chan_{hj}$ at time $\shift{m,t_j}-1$ in ${r'}$ so $\mu$ is delivered in round $\shift{m}$ of ${r'}$ as well. 
      \item Otherwise, i.e., $\mu$ is not delivered in round $m$ of $r$. Assume by way of contradiction that~$\mu$ is  delivered in round $\shift{m,t_j}$ of ${r'}$.
    So $|\mu,\shift{m_h,t_h}|\in chan_{hj}$ at time $\shift{m,t_j}-1$ in ${r'}$ and thus $\mu$ is sent in round $\shift{m_h,t_h}< \shift{m,t_j}$ of ${r'}$. By the inductive hypothesis, $\mu$ is sent in round $m_h$ of~$r$ . Since $\mu$ is not delivered in round~$m$ of $r$, while $\eta_j(r,m)=\del_j(|\mu,m_h|)$, we have that $\mu$ is delivered in some round $m'<m$ of $r$. So by  Claim 2, $\mu$ must be delivered at time $\shift{m',t_j}<\shift{m,t_j}$ in ${r'}$. Hence, $|\mu,\shift{m_h,t_h}|\notin chan_{hj}$ at time $\shift{m,t_j}-1$ in ${r'}$, contradicting the fact that $\mu$ is delivered in round $\shift{m,t_j}$ of ${r'}$.
  \end{itemize}
    We thus obtain that $r_j(m)={r'}_j(\shift{m,t_j})$, and that the same actions (none in this case) and the same messages are delivered in round $m$ of $r$ and in round $\shift{m,t_j}$ of~${r'}$.
		\end{itemize}
\end{proof}

\section{Operations}\label{sec:operations}
To capitalize on the power of \Cref{thm:paslemma2.1}, we now set out to show how operations on distributed objects can be rearranged while maintaining local equivalence. We consider operations that are associated with individual processes. An operation $\Op$ of type~$O$%
\footnote{While processes are typically able to perform particular types of operations on concurrent objects, such as reads, writes, etc., many different instances of an operation may appear in a given run.  Every instance of an operation has a type.}
starts with an invocation input $\inv_i(O,\arg)$ from the environment to process~$i$, and ends when process~$i$ performs a matching response action   $\ret_i(O,\arg)\in Act_i$. 
Operation invocations in our model are nondeterministic and asynchronous --- the environment can issue them at arbitrary times.\footnote{We assume for simplicity that following an $\inv_i$, the environment will not issue another $\inv_i$ to the same process before~$i$ has provided a matching response.} 
Operations can have invocation or return parameters, which appeared as $\arg$ in the above notation. E.g., a write invocation to a register will have a parameter $v$ (the value to be written), while the response to a read on the register will provide the value~$v'$ being read. 


We say that an operation $\X$ occurs between nodes $\theta=\node{i,t}$ and $\theta'=\node{i,t'}$ in $r$ if $\X$'s invocation by the environment (of the form $\inv_i(\cdot)$) occurs in round~$t$ in $r$ and process $i$ performs $\X$'s response action in round~$t'$. In this case we denote $\X.s\triangleq\theta$ and 
$\X.e\triangleq\theta'$, and use $t_{\X.s}(r)$ to denote the operation's starting time~$t$ and $t_{\X.e}(r)$ to denote its ending time~$t'$. When the run is clear from the context we do not precise it.
An operation $\Op$ is {\em completed} in a run~$r$ if $r$ contains both the invocation and response of~$\Op$, otherwise $\Op$ is {\em pending}. Observe that in a crash prone environment, it is not possible to guarantee that every  operation completes, since once a process crashes, it is not able to  issue a response. 
\begin{definition}[Real-time order and concurrency]
   For two operations $\X$ and $\Y$  in $r$ we say that $\X$ {\em precedes} $\Y$ in~$r$, denoted $\X<_{r}\Y$,  if $t_{\X.e}(r)<t_{\Y.s}(r)$, i.e., if~$\X$ completes before~$\Y$ is invoked. 
        If neither $\X$  precedes $\Y$ nor $\Y$ precedes $\X$, then $\X$ and $\Y$ are considered {\em concurrent} in~$r$. 
Finally, $\X$ is said to {\em run in isolation} in~$r$ if no operation is concurrent to $\X$ in $r$.
   
\end{definition}

\begin{definition}[Message chains among operations]
     We write $\X\opmc_r\Y$ and say that there is a message chain between the operations $\X$ and $\Y$ in $r$ if $\X.s\mc_r\Y.e$. 
\end{definition}
Notice that $\X\opmc_r\Y$ does not imply that $\X$ happens before~$\Y$ in real time (i.e., it does not imply that $\X<_r\Y$). Rather, it only implies  that  $\Y$ does  not end before $\X$ starts (i.e., $\Y\not<_r\X$). Moreover, 
while 
`$\mc_r$' among individual nodes is  transitive, 
 `$\opmc_r$' among operations is not. 

An operation $\X$ of~$i$ in the run~$r$ 
is said to \emph{correspond} to operation~$\X'$ of $j$ in $r'$,  denoted by $\X\sim\X'$, if $i=j$ (they are performed by the same process), $\X.s\sim\X'.s$ and $\X.e\sim \X'.e$. 
Note that for locally equivalent runs $r\loc r'$, for every operation~$\X$ in~$r$ there is a corresponding operation $\X'$ in~$r'$. In the sequel, we will often refer to corresponding operations in different runs by the same name.
Observe that, by the definition of $\opmc_r$ and \Cref{lem: loc_eq}, if $\X\opmc_r\Y$ and $r\loc r'$ then $\X\opmc_{r'}\Y$. 

We are now ready to use \Cref{thm:paslemma2.1} to show that if a run does not contain a message chain from one operation to another operation, then  operations in the run can be reordered so that the former operation takes place strictly after the latter one. More formally:  

\begin{theorem}[Moving one operation ahead of the other]\label{thm:reordering}
Let $\X$ and~$\Y$ be two operations in a run $r$.
If $\Y$ completes in~$r$ and $\X\not\opmc_r \Y$,  then there exists a run $r'\loc r$ in which both  (i) $\Y<_{r'}\X$ and  (ii) 
$\X<_{r'}\Z$ holds for every completing operation~$\Z$ in~$r$ such that $\X<_r\Z$ and $\Z\not\opmc_r\Y$.
\end{theorem}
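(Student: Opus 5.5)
The plan is to apply \Cref{thm:paslemma2.1} with the pivot node $\theta=\Y.e$ and a delay $\Delta$ larger than the span of $\Y$, for instance $\Delta > t_{\Y.e}(r)$. Let $r'\loc r$ be the resulting run, and let $t_j$ be the first time at which process $j$ leaves $\past_r(\Y.e)$. By the theorem, every node $\node{j,m}$ with $m\le t_j$ keeps its timing in $r'$, and every node with $m\ge t_j+1$ is shifted to time $m+\Delta$. Operations in $r$ correspond to operations in $r'$. So it suffices to track how the start and end times of $\X$, $\Y$ and each relevant $\Z$ move.

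First, $\Y$ itself is unchanged. Every node of $\Y$'s process up to $\Y.e$ lies in $\past_r(\Y.e)$: it either equals $\Y.e$ or precedes it on the same timeline by clause (1a). Here I will use the convention that $\theta$ itself counts as not shifted. The theorem fixes $r_j(m)=r'_j(m)$ for $m\le t_j$, and $\Y.e$ occurs at a time $\le t_j$ once we note that the pivot's own timeline is preserved up to and including $t$. This boundary point needs a short check: the response action is performed in round $t_{\Y.e}$, which is at most $t_j$ for the owner of $\Y$, so it is not shifted. Hence $t_{\Y.s}(r')=t_{\Y.s}(r)$ and $t_{\Y.e}(r')=t_{\Y.e}(r)$.

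Next, $\X$ is shifted. Since $\X\not\opmc_r\Y$, we have $\X.s\notin\past_r(\Y.e)$. Past sets are downward closed along timelines, since $\node{j,m}\mc\node{j,m'}$ for $m<m'$. Hence the invocation round $t_{\X.s}$ satisfies $t_{\X.s}\ge t_j+1$ for $\X$'s process $j$, so it is shifted to $t_{\X.s}(r)+\Delta > t_{\Y.e}(r)$. This gives (i), namely $\Y<_{r'}\X$. Everything after $\X.s$ on that timeline is also shifted, so $t_{\X.e}(r')=t_{\X.e}(r)+\Delta$ whenever $\X$ completes.

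For (ii), take a completing $\Z$ with $\X<_r\Z$ and $\Z\not\opmc_r\Y$. By the same argument as for $\X$, $\Z.s$ lies outside the past, so $t_{\Z.s}(r')=t_{\Z.s}(r)+\Delta$. Since $\X$ precedes $\Z$, $\X$ completes, and then $t_{\X.e}(r')=t_{\X.e}(r)+\Delta<t_{\Z.s}(r)+\Delta=t_{\Z.s}(r')$. This gives $\X<_{r'}\Z$.

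The main obstacle is the bookkeeping about rounds versus nodes. An invocation or response in round $m$ is recorded in the state at time $m$, so I must check that "the event occurs in round $m$" maps to round $\shift{m,t_j}$. This is property (ii) in the proof of \Cref{thm:paslemma2.1}, which states that actions in round $m$ of $r$ happen in round $\shift{m,t_j}$ of $r'$. I must also check that the membership of $\X.s$, $\Z.s$ and $\Y.e$ in the past translates to the right side of the threshold $t_j$, using downward closure of $\past_r$ along each timeline.
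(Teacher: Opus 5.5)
Your proposal is correct and is essentially the paper's proof: apply Theorem~\ref{thm:paslemma2.1} with pivot $\Y.e$, observe that $\Y$ keeps its timing while $\X.s$, $\X.e$ and every relevant $\Z.s$ are pushed forward by $\Delta$, and your choice $\Delta>t_{\Y.e}(r)$ is slightly more robust than the paper's $\Delta=t_{\Y.e}(r)-t_{\X.s}(r)+1$, which is not positive when $\Y<_r\X$ already holds. One step needs tightening: minimality of $t_j$ (or downward closure) only gives $t_{\X.s}\ge t_j$, and $t_{\X.s}=t_j$ would leave the invocation unshifted, so you must exclude it by noting that for $t_j\ge 1$ process $j$ must move in round $t_j$ (it either sends the first message of the chain from $\node{j,t_j-1}$ to $\Y.e$, or performs $\Y$'s response), hence no invocation can occur at $j$ in that round; the same remark applies to $\Z.s$.
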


\begin{proof}
    Let $r'$ be the run built in the proof of \Cref{thm:paslemma2.1} wrt.\ the run $r$ with pivot $\theta=\Y.e$ and delay $\Delta=t_{\Y.e}(r)-t_{\X.s}(r)+1$. By \Cref{thm:paslemma2.1} we have that $r\approx r'$, so each process performs the same operations and in the same local order. By the assumption, $\X\not\opmc_r \Y$, i.e., $\X.s\not\mc_r \Y.e$, so $\X$ is moved forward by $\Delta$ while $\Y$ happens at the same real time in both $r$ and~$r'$. We thus have that $\Y<_{r'}\X$ because 
    \[t_{\X.s}(r')=t_{\X.s}(r)+\Delta=t_{\X.s}(r)+t_{\Y.e}(r)-t_{\X.s}(r)+1=t_{\Y.e}(r)+1=t_{\Y.e}(r')+1>t_{\Y.e}(r').\]
    Finally, let $\Z$ be an operation in~$r$ such that $\Z\not\opmc_r\Y$ and $\X<_r\Z$. 
    Since $\Z\not\opmc_r\Y$, the real times of both $\X.e$ and $\Z.s$ in $r'$ are shifted by $\Delta$ relative to their times in~$r$.  Thus, $\X<_r\Z$ implies that~$\X$ ends before $\Z$ starts in $r'$ also, i.e., $\X<_{r'}\Z$.
\end{proof}
\section{Registers and Linearizability}\label{sec:registers}
    A  {\em register} is a shared object that supports two types of operations: {\em reads} $R$ and {\em writes} $W$.  We focus on implementing a MWMR (multi-writer multi-reader) register, in which every process can perform  reads and writes, in an asynchronous message-passing system. 
Simulating a register in an asynchronous system has a long tradition in distributed computing, starting with the work of \cite{ABD}. When implementing registers in the message passing model, one typically aims to mimic the behaviour of an atomic register. A register is called {\em atomic} if its read and write operations are instantaneous, and each read operation returns the value written by the most recent write operation (or some default initial value if no such write exists). The standard correctness property required of such a simulation is Herlihy and Wing's {\em linearizability} condition \cite{HerlihyLineari}. 
   Roughly speaking, an object implementation is linearizable if, although operations can be concurrent, operations behave as if they occur in a sequential order that is consistent with the real-time order in which operations actually occur: if an operation $\Op$ terminates before an operation $\Op'$ starts, then $\Op$ is ordered before $\Op'$.
   More formally:
   
   We denote by $\inv_i(W,v)$ the invocation of a write operation of value $v$ at process~$i$ and by $\ret_i(W)$ the response to a write operation. (Recall that the invocation is an external input  that process~$i$ receives from the environment, while the response is an action that~$i$ performs.)  Similarly, $\inv_i(R)$ denotes the invocation of a read operation at process~$i$ and by $\ret_i(R,v)$ the response to a read operation returning value $v$. We say that an invocation $\inv_i(\cdot)$ and a response $\ret_i(\cdot)$ are {\em matching} if they both are by the same process and in addition, they both are invocation and response of an operation of the same type.
   \begin{definition}[Sequential History]
       A sequential history is a sequence $H= S_0,S_1,...$ of  invocations and responses in which the even numbered elements $S_{2k}$ are invocations and the odd numbered ones are responses, and where $S_{2k}$ and $S_{2k+1}$ are matching invocations and responses whenever $S_{2k+1}$ is an element of~$H$.  
   \end{definition}
We use the following notation:
\begin{notation}
    Let $H$ be a sequential history and let $\X,\Y$ be two operations in $H$. We denote $\X<_{H}\Y$ the fact that $\X$'s response appears before $\Y$'s invocation in $H$.
\end{notation}
    \begin{definition}
        An {\em atomic register history} is a sequential history~$H$ in which every  read operation returns the most recently written value, and if no value is written before the read, then it returns the default value~$\bot$. 
    \end{definition}

\begin{definition}[Linearization]
       A {\em  linearization of a run $r$} is an atomic register history $H$ satisfying the following. 
       \begin{itemize}
           \item The elements of~$H$ consist of the invocations and responses of all completed operations in~$r$, possibly some invocations of pending operations in~$r$, and for each invocation of a pending operation that appears in~$H$, a matching response.  
         \item If $\X<_r\Y$ and the invocation of $\Y$ appears in $H$, then $\X<_H\Y$.
       \end{itemize}
\end{definition}
\begin{definition}[Linearizable Protocols]
     $P$ is a (live) linearizable atomic register protocol (\,\larp) if 
     for every run~$r$ of~$P$:
     \begin{itemize}
         \item every operation invoked at a nonfaulty process in~$r$ completes, and 
     \item there exists a linearization of~$r$ as defined above. 
     \end{itemize} 
    \end{definition}

    Unless explicitly mentioned otherwise, all of the runs~$r$ in our formal statements below are assumed to be runs of an \larp~$P$.


\section{Communication Requirements for Linearizable Registers}\label{sec:XaYb}
In this section, we study the properties of linearizable atomic register protocols in the asynchronous message passing model.
Since linearizability is local \cite{HerlihyLineari}, it suffices to focus on implementing a single register, since a correct implementation will be compatible with linearizable implementations of other registers and objects. 
We assume for ease of exposition that a given value can be written to the register at most once in any given run. (It follows that if the value~$v$ is written in~$r$, we can denote the write operation by $\W(v)$). 

 We say that an operation $\X$ is a \emph{$v$-operation} 
 and write $\X v$ if (i) $\X$ is a read that returns value $v$, or (ii) $\X$ is a write operation writing~$v$. 
 In every linearization history of a run~$r$ of an \larp, a read operation 
returning a value $v\ne\bot$ must be preceded be an operation writing the value~$v$. A direct application of \Cref{thm:reordering} allows us to formally prove that, as expected,  a read operation returning~$v$ must receive a message chain from the operation writing~$v$: 
\begin{lemma}\label{lem: W_to_R}
    If a read operation $\X v$ in~$r$  returns a value $v\ne\bot$ then $\W(v) \opmc_r\X v$.
\end{lemma}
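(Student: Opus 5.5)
We argue by contradiction, using \Cref{thm:reordering} with $\X\triangleq\W(v)$ and $\Y\triangleq\X v$. Suppose $\W(v)\not\opmc_r\X v$. The read $\X v$ completes in~$r$, since it returns~$v$. Hence \Cref{thm:reordering} yields a run $r'\loc r$ with $\X v<_{r'}\W(v)$. By \Cref{lem: loc_eq}(ii), $r'$ is also a run of~$P$.

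Local equivalence also preserves the contents of each operation. Every process passes through the same local states in $r'$ as in~$r$, so the read corresponding to $\X v$ still returns~$v$ in~$r'$. Moreover, $\W(v)$ is still invoked in~$r'$ with the same argument. Since each value is written at most once, $\W(v)$ is the unique write of~$v$ in~$r'$.

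Since $P$ is an \larp, there is a linearization $H$ of~$r'$. The read $\X v$ is completed in~$r'$, so it appears in~$H$. Because $H$ is an atomic register history and $v\neq\bot$, the value $v$ must be the most recently written value before $\X v$ in~$H$. Therefore some write of~$v$, namely $\W(v)$, appears in $H$ with $\W(v)<_H\X v$; in particular its invocation appears in~$H$.

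On the other hand, $\X v<_{r'}\W(v)$ and the invocation of $\W(v)$ appears in~$H$. The real-time clause of the linearization definition then forces $\X v<_H\W(v)$. In a sequential history two distinct operations cannot each precede the other, so this contradicts $\W(v)<_H\X v$. Hence $\W(v)\opmc_r\X v$.

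The main point needing care is the treatment of $\W(v)$ when it is pending in~$r'$. We do not need $\W(v)$ to complete. It suffices that its invocation appears in~$H$, and this is forced by $\X v$ returning $v\neq\bot$. The real-time clause applies as soon as that invocation appears.
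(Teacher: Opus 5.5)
Your proof is correct and follows the paper's argument. You apply \Cref{thm:reordering} to obtain $r'\loc r$ with $\X v<_{r'}\W(v)$, and then observe that any linearization of $r'$ would have to place the unique write of $v$ both before $\X v$ (because the read returns $v$) and after it (by real-time order). Your instantiation $\X\triangleq\W(v)$, $\Y\triangleq\X v$ is the one that actually matches the theorem's hypotheses; the paper's text names the roles the other way round while drawing the same conclusion. Your remark that only the invocation of $\W(v)$, not its completion, is needed in $H$ also settles a point the paper leaves implicit.
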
 
\begin{proof}
     
    Let~$r$ be a run of a \larp~$P$, and assume by way of contradiction that there is an operation $\X v$ with $v\ne\bot$ in~$r$  such that $\W(v)\not\opmc_r\X v$. Since $\X v$ is assumed to return~$v$, it completes in~$r$. 
    Applying \Cref{thm:reordering} wrt.\ $\X =\X v$ and $\Y=\W(v)$ we obtain a run $r'\loc r$ such that $\X v<_{r'}\W(v)$. By \Cref{lem: loc_eq}(ii) we have $r'$ is a run of~$P$ as well. It follows that~$r'$ must have a linearization~$H$. But by linearizability, $H$ must be such that $\X v<_H \W(v)$. 
    Since~$v$ is written only once, there is no write of~$v$ before $\X v$ in~$H$, contradicting the required properties of a linearization. 
\end{proof}

\Cref{lem: W_to_R} proves an obvious connection: For a value to be read, someone must write this value, and the reader must receive information that this has occurred. 
But as we shall see, linearizability also forces the existence of other message chains; indeed,  most pairs of operations in an execution must be related by a message chain.

A straightforward standard but very useful implication of linearizability for atomic registers is captured by the following lemma. 
\begin{lemma}[no $a$-$b$-$a$]\label{lem:abc}
     Let 
     $\Xa<_r\Yb<_r\Z c$ be three completing operations in a run $r$ of a \larp~${P}$. If $a\neq b$ then $a\neq c$.
\end{lemma}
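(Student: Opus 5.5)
The proof goes by contradiction directly from the definition of a linearization. I assume $a\neq b$ and $a=c$, and take a linearization $H$ of $r$, which exists because $P$ is an \larp. All three operations $\Xa$, $\Yb$, $\Z c$ complete in $r$, so their invocations and responses appear in $H$. Since $\Xa<_r\Yb<_r\Z c$, the real-time clause of a linearization gives $\Xa<_H\Yb<_H\Z a$.

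Next I use the register semantics in $H$ to locate writes. Each $v$-operation is either the write $\W(v)$ itself or a read returning $v$. In an atomic register history such a read returns the value of the most recent preceding write, or $\bot$ if there is none. So for a $v$-operation $\Op$ I define its \emph{effective value point} as $\Op$ itself if it is a write, and otherwise the last write preceding it in $H$ (or the beginning of $H$ if $v=\bot$). The key observation is that at the point of $\Op$ in $H$, the most recent write has value $v$, or there is no write yet when $v=\bot$.

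Now I apply this to the three operations in order. At $\Yb$ the current value is $b\neq a$. At the earlier point $\Xa$ the current value was $a$, so some write of a value different from $a$ occurs in $H$ after $\Xa$ and no later than $\Yb$. If $b\neq\bot$ this is $\W(b)$, which is after $\Xa$ because the current value at $\Xa$ is $a\ne b$. If $b=\bot$, then no write precedes $\Yb$ at all, so the current value at $\Xa$ must also be $\bot=b$, contradicting $a\neq b$. Hence $b\neq\bot$ and $\W(b)$ lies strictly between $\Xa$ and $\Z a$ in $H$. At $\Z a$ the current value is $a$, so either $a=\bot$ or the last write before $\Z a$ is $\W(a)$. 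If $a=\bot$, having $\W(b)$ before $\Z a$ contradicts the requirement that a read returns $\bot$ only when no write precedes it. If $a\neq\bot$, then $\W(a)$ must come after $\W(b)$ and hence after $\Xa$. But $\Xa$ is either $\W(a)$ itself or a read of $a$, which is preceded by $\W(a)$. Either way $\W(a)$ would appear twice in $H$, contradicting the assumption that each value is written at most once.

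The main obstacle is bookkeeping rather than a deep step. The cases are whether each of $\Xa$, $\Yb$, $\Z a$ is a read or a write, and whether $a$ or $b$ equals $\bot$. The plan keeps these manageable by arguing uniformly through the current value at each point of $H$, so no message-chain results are needed.
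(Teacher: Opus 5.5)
Your proof is correct and follows essentially the same route as the paper: take a linearization $H$, and use the real-time clause to get $\Xa<_H\Yb<_H\Z c$. Then locate $\W(b)$ strictly between $\Xa$ and $\Z c$ in $H$, and derive the contradiction from the most-recent-write semantics together with the assumption that each value is written at most once. Your explicit handling of the cases $a=\bot$ and $b=\bot$ is more careful than the paper's argument, whose Claim (i) tacitly assumes that the value returned by a read was actually written.
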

\begin{proof}

    We first show the following claim:
\begin{claim}\label{obs:linpoint}
    Let $\R v$ be a completing read operation occurring in $r$ and let $H$ be a linearization of $r$. Then (i) $\W (v)<_H\R v$, and moreover (ii)~there is no value~$v'\neq v$ s.t. $\W (v)<_H\W(v')<_H \R v$.
\end{claim}
\begin{proof}
    Recall that the sequential specification of a register states that a read must return the most recent written value. The fact that the value $v$ must have been written implies (i). The fact that it is the last written value linearized before $\R v$ implies (ii).
\end{proof}
    Returning to the proof of \Cref{lem:abc}, let $H$ be a linearization of $r$. Clearly, the real time order requirement of linearizability implies that $\Xa<_H\Yb<_H\Z c$. By \Cref{obs:linpoint} (i), we have that $\W(a)\leq_H\Xa$ and $\W(b)\leq_H\Yb$. Combining these inequalities with \Cref{obs:linpoint} (ii), we obtain that $\W(a)\leq_H\Xa<_H\W(b)\leq_H\Yb\leq_H\Z c$. 
    If $\Z c$ is a write operation then $a\neq c$ results from the fact that $\W(a)<_H \Z c$ and that the value $a$ can be written at most once in $r$. If $\Z c$ is a read operation, then it cannot return $a$ since the value $a$ is not the last written value before~$\Z c$ (since $\W(a)<_H\W(b)$).
   
\end{proof}
Lemmas \ref{lem: W_to_R} and \ref{lem:abc} explain the second communication round of the ABD algorithm \cite{ABD}, also known as {\em Write-Back}: Roughly speaking, the Write-Back of a read $\R$ returning value $v$ guarantees that the reader {\em knows} that for every future read $\R'$, the run will contain a message chain from $\W(v)$ through $\R$ to $\R'$.

Based on \Cref{thm:reordering} and \Cref{lem:abc}, we are now in a position to prove our most powerful result about  linearizable implementations of atomic registers, which shows that they must create message chains between operations of all types: Reads to writes, writes to writes, reads to reads and writes to reads.  Intuitively, \Cref{thm:winningread} shows that if a value~$b$ is read, then every $b$-operation must be reached by a message chain from all other earlier operations.
\begin{theorem}[Linearizability entails message chains]\label{thm:winningread}

   Let $\R b$ be a completing read operation in~$r$ and let $\Yb$ be a $b$-operation that completes in~$r$ such that  $\R b \not\opmc\Yb$. 
   Then for every $c\ne b$ and operation $\X c<_r\R b$, the run~$r$ contains a message chain $\X c\opmc_r\Yb$.
\end{theorem}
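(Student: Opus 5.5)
The plan is to argue by contradiction, combining \Cref{thm:reordering} with \Cref{lem:abc}. Suppose that for some $c\ne b$ and some operation $\X c<_r\R b$ we have $\X c\not\opmc_r\Yb$. Since $\Yb$ completes in $r$, we may apply \Cref{thm:reordering} with $\X=\X c$ and $\Y=\Yb$. This gives a run $r'\loc r$ in which $\Yb<_{r'}\X c$. Moreover, $\X c<_{r'}\Z$ for every completing $\Z$ with $\X c<_r\Z$ and $\Z\not\opmc_r\Yb$.

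The key choice is $\Z=\R b$. It is completing by assumption, we have $\X c<_r\R b$ by hypothesis, and $\R b\not\opmc_r\Yb$ is exactly the theorem's hypothesis. Clause (ii) of \Cref{thm:reordering} therefore gives $\X c<_{r'}\R b$.

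The remaining step is to read off a forbidden pattern in $r'$. By \Cref{lem: loc_eq}(ii), $r'$ is a run of the \larp~$P$. Corresponding operations keep their types and values, so $\Yb$ is still a $b$-operation, $\X c$ is still a $c$-operation, and $\R b$ still returns $b$, all three completing. We now have $\Yb<_{r'}\X c<_{r'}\R b$ with $b\ne c$. \Cref{lem:abc} then forces $b\neq b$, which is a contradiction. Hence $\X c\opmc_r\Yb$.

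The only delicate point is checking that the side conditions of \Cref{thm:reordering}(ii) are met by $\Z=\R b$; in particular, the hypothesis $\R b\not\opmc\Yb$ is exactly what keeps $\R b$ shifted together with $\X c$. A degenerate case should also be noted. If $\Yb$ is $\R b$ itself, the hypothesis $\R b\not\opmc\R b$ fails trivially, since $\R b.s\mc\R b.e$ by clause (1a), so no special handling is needed. Likewise, $\X c$ cannot coincide with $\Yb$ because $c\ne b$. Everything else follows directly from results already stated in the paper.
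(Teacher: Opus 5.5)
Your proof is correct and follows essentially the same route as the paper. You assume $\X c\not\opmc_r\Yb$ and apply \Cref{thm:reordering} with $\X=\X c$ and $\Y=\Yb$. You then use clause (ii) with $\Z=\R b$, which is licensed exactly by the hypothesis $\R b\not\opmc_r\Yb$, to obtain $\Yb<_{r'}\X c<_{r'}\R b$, and this contradicts \Cref{lem:abc}; your explicit appeal to \Cref{lem: loc_eq}(ii) and your remarks on the degenerate cases are minor additions that the paper leaves implicit.
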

\begin{proof}
    Assume by way of contradiction that there is an operation $\X c<_r\R b$ such that $\X c\not\opmc_r\Yb$. First notice that all three operations $\X c$, $\Yb$ and $\R b$ complete in $r'$, since $\R b$ and~$\Yb$ complete by assumption and $\X c<_r\R b$.  We apply \Cref{thm:reordering} wrt.\ $\X=\X c$ and $\Y=\Yb$ and obtain a run $r'\loc r$ such that $\Yb<_{r'}\X c$. Moreover, since $\X c<_r\R b$, we also have by \Cref{thm:reordering} (ii) that $\X c<_{r'}\R b$. We thus obtain $\Yb<_{r'}\X c<_{r'}\R b$ for values $b\neq c$.  This contradicts \Cref{lem:abc}, completing the proof.
\end{proof}

Intuitively, \Cref{thm:winningread} shows that read or write operations involving a value that is actually read (i.e., returned by a read operation) must receive message chains from practically all earlier operations. We can show that the same can be true more broadly, e.g., even for a completing write operation $\W(v)$ where~$v$ is never read in the run. 

\begin{corollary}
\label{thm:isolation}
     Let $\Xa<_r\Yb$ and assume that~$\Yb$ completes in~$r$. If $\Yb$ runs in isolation in~$r$ and $a\neq b$, then $\Xa\opmc_r \Yb$.
\end{corollary}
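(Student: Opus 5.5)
The plan is to argue by contradiction, following the pattern of \Cref{thm:winningread}. The difference is that the ``witness'' read $\R b$ must now be created rather than assumed. Suppose $\Xa<_r\Yb$, $\Yb$ completes and runs in isolation in~$r$, $a\neq b$, and $\Xa\not\opmc_r\Yb$. Note that $\Xa$ completes, since $\Xa<_r\Yb$.

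\textbf{Step 1: append a read of $b$.} Since $\Yb$ runs in isolation, no operation is concurrent with it. I would take the finite prefix of $r$ up to time $t_{\Yb.e}(r)$ and extend it to a run $\hat r$ of~$P$ in which no failures or message losses occur afterwards (or at least the process $j$ performing $\Yb$ is nonfaulty). The environment invokes a read $\R$ at $j$ immediately after $\Yb$ ends, and no other operations are invoked until $\R$ completes; $\R$ completes by liveness.

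In $\hat r$ we have $\Xa<_{\hat r}\Yb<_{\hat r}\R$, and $\Yb$ is still isolated up to the end of $\R$. In any linearization $H$, every write preceding $\Yb$ in real time is before $\Yb$ in $H$, and no write is concurrent with $\Yb$ or with $\R$ up to that point. Hence $\R$ must return $b$: if $\Yb$ is $\W(b)$, it is the last write before $\R$ in $H$; if $\Yb$ is a read returning $b$, then the last write before $\Yb$ in $H$ is $\W(b)$, and no write lies between $\Yb$ and $\R$ in $H$. Operations that are pending in $r$ and that start after $\Yb.e$ are not in the prefix, so they cause no interference.

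\textbf{Step 2: apply \Cref{thm:winningread}.} In $\hat r$, $\R b$ is invoked after $\Yb$ ends, so $\R b.s\not\mc_{\hat r}\Yb.e$ by \Cref{obs: mc and rt}. Thus $\R b\not\opmc_{\hat r}\Yb$. Also $\Xa<_{\hat r}\R b$ and $a\neq b$, so \Cref{thm:winningread} gives $\Xa\opmc_{\hat r}\Yb$.

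\textbf{Step 3: transfer the chain back to $r$.} The relation $\Xa.s\mc\Yb.e$ involves only nodes up to time $t_{\Yb.e}$, and $\hat r$ agrees with $r$ on that prefix. Therefore $\Xa\opmc_r\Yb$, which is a contradiction. Alternatively, the corollary can be proved directly rather than by contradiction through this same construction.

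\textbf{Main obstacle.} The delicate step is Step~1. Constructing the extension $\hat r$ requires only the nondeterminism of the environment. The real work is arguing that $\R$ must return $b$, which needs care about pending operations of $r$ that might be linearized. Because all such operations start before $\Yb.s$, and $\Yb$ is isolated, they are either completed before $\Yb.s$ or do not exist. So the only writes that can appear in $H$ before $\R$ are ones preceding $\Yb$ (or $\Yb$ itself), and the claim follows.
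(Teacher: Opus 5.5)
Your proof is correct and follows essentially the same route as the paper. Both arguments extend the prefix of $r$ up to $t_{\Yb.e}$ with a read invoked right after $\Yb$ at a nonfaulty process, argue that this read must return $b$, apply \Cref{thm:winningread}, and transfer the resulting chain back to $r$ because it lies entirely within the common prefix. Your version is also slightly more careful than the paper's: you forbid other invocations until the appended read completes (needed to force it to return $b$), and you explicitly check the hypothesis $\R b\not\opmc\Yb$ via \Cref{obs: mc and rt}.
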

\begin{proof}
Let $r$ be a run satisfying the assumptions. There exists a run~$r'$ such that (i) $r'$ is identical to~$r$ up to $t_{\Yb.e}(r)$ (in particular, $r'(m)= r(m)$ for all $0\leq m \leq t_{\Yb.e}$), and (ii) there is an invocation of a read operation $\R$ in round $t_{\Yb.e}+1$ of~$r'$, at a process~$i$ that is nonfaulty in $r'$. Since~$i$ is nonfaulty, $\R$ completes in~$r'$. 
    Moreover, since~$\Yb$ runs in isolation and  $\R$ starts after $\Yb$ ends, the value returned by~$\R$ must be $b$. We obtain a run $r'$ in which $\Yb<_{r'}\R b$ and $\Xa<_{r'}\R b$ with $a\neq b$. So by \Cref{thm:winningread} we have that $\Xa\opmc_{r'} \Yb$. Since $r'(m)= r(m)$ for all $0\leq m \leq t_{\Yb.e}$ it follows that $\Xa\opmc_{r}\Yb$, as claimed.
\end{proof}

\section{Failures and Quorums}\label{sec:quorumsandfailures}

By assumption, invocations of reads and writes to a register are  spontaneous events, which is modeled by assuming that they are determined by the adversary (or the environment in our terminology) in a  nondeterministic fashion. Intuitively, in  a completing register implementation, the adversary can at any point wait for all operations to return and then perform a read. Suppose that this read operation is invoked at time~$t$ and that the value it returns is~$v$. Then, by \Cref{thm:winningread}, the resulting run~$r$ must contain  message chains $\X\opmc_r \W(v)$ from \emph{every} operation~$\X$ that completed before time~$t$ to the write operation $\W(v)$. 
Therefore, before it can complete, every operation~$\X$ must ensure that message chains from~$\X$ to future operations can be constructed. 
There are several ways to ensure this in a reliable system. 
One way is by requiring the process on which~$\X$ is invoked to construct a message chain to all other processes before $\X$ returns. This essentially requires a broadcast to all processes that starts after~$\X$ is invoked.   
Another way to ensure this is by having every transaction~$\Y$  coordinate a convergecast to it from all processes, that is initiated after~$\Y$ is invoked. Each of these can be rather costly. A third, and possibly more cost effective way can be to assign a distinguished coordinator process~$c$ for the register object, and ensure that every operation~$\X$ creates a message chain to~$c$ that is followed by a message chain back from~$c$ to the process invoking~$\X$. 
Notice that none of these strategies can be used in a system in which one or more processes can crash: After a crash, neither the broadcast nor the convergecast would be able to complete. Similarly, a coordinator~$c$ as described above would be a single point of failure, and once it crashes no operation could complete.  


We now show that in a system in which up to~$f$ processes can crash, \Cref{thm:winningread} implies that an operation must complete round-trip communications with at least~$f$ other processes before it can terminate. We proceed as follows.


\begin{definition}
We say that a process~$p$ \emph{observes} a completed operation~$\X$ in a run~$r$ if~$r$ contains a message chain from $\X.s$ to $\node{p,t_{\X.e}}$. (The message chain reaches~$p$ by the time operation~$\X$ completes.)
Process~$p$ is called a \emph{witness for~$\X$} in~$r$ if $r$ contains a message chain from~$\X.s$ to $\X.e$ that contains a $p$-node $\theta=\node{p,t}$. 
\end{definition}

\begin{lemma}\label{lem: observers}
     Let~$P$ be an $f$-resilient \larp, 
     and let~$\X$ be a completed  operation in a run~$r$  of~$P$. Then more than~$f$ processes must observe~$\X$ in~$r$.
\end{lemma}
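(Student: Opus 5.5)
We argue by contradiction. Suppose $\X$ is a completed operation of process $i$ in a run $r$ of the $f$-resilient \larp~$P$, and let $S$ be the set of processes that observe~$\X$ in~$r$, i.e., those $p$ with $\X.s\mc_r\node{p,t_{\X.e}}$. Note that $i\in S$, by clause (1a) of \Cref{def:msg-chain}, since $t_{\X.s}<t_{\X.e}$. Assume $|S|\le f$. Our plan is to build a run that agrees with~$r$ up to time $t_{\X.e}$, in which every process of $S$ crashes right after $t_{\X.e}$. We then show that no later operation can be reached by a message chain from~$\X$, and derive a contradiction with \Cref{thm:winningread}. The case where $\X$ writes or returns a value equal to what a later write produces is handled by choosing fresh values.

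\textbf{Step 1 (the crash run).} Let $r'$ be a run of~$P$ with $r'(m)=r(m)$ for all $m\le t_{\X.e}$. From round $t_{\X.e}+1$ on, the environment never schedules processes in $S$ (they crash), and never delivers any message that is in transit at time $t_{\X.e}$ from a process in $S$. Such messages are therefore lost. This is consistent with $f$-resilience, since at most $|S|\le f$ processes fail. In $r'$, at time $t_{\X.e}+1$ the environment invokes a write $\W(b)$ at a correct process $j\notin S$, where $b$ is a fresh value different from $\X$'s value. Once $\W(b)$ completes (it must, by liveness), it invokes a read $\R$ at $j$ (or at any correct process), which also completes. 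All other pending operations are left running, and all correct processes are scheduled fairly. Some process is correct here because $n>f$, which is implicit for resilience to be meaningful.

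\textbf{Step 2 (no chain from $\X$).} Any message chain in $r'$ from $\X.s$ to a node at time $>t_{\X.e}$ must pass through the time-$t_{\X.e}$ frontier. At the frontier it is either at a node of a process in $S$, or it is carried by a message already in transit that was sent from a node in $\past$-reach of $\X.s$. We expect this frontier argument to be the main technical point. We formalize it by induction on the length of the chain: every node $\theta'$ with $\X.s\mc_{r'}\theta'$ and time $>t_{\X.e}$ lies on a process of $S$. Consider the first step of the chain that leaves a process of $S$ after time $t_{\X.e}$. This step would have to be a message sent by a process in $S$, either before $t_{\X.e}$ (and such messages are lost) or after (and such processes no longer move). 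The case of a message sent by a process $q\notin S$ at a node reached from $\X.s$ cannot arise, because that node would make $q$ an observer. Hence the only nodes after $t_{\X.e}$ reachable from $\X.s$ belong to crashed processes in $S$, and so $\X\not\opmc_{r'}\W(b)$. Here we use $j\notin S$ and the fact that $\W(b).e$ is at time $>t_{\X.e}$.

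\textbf{Step 3 (contradiction).} The read $\R$ starts after $\W(b)$ ends, and $\W(b)$ is invoked after $\X$ completes. The only writes concurrent with or after $\W(b)$ are pending ones from before $t_{\X.e}$, so $\R$ might not return $b$. To avoid this we apply \Cref{thm:winningread} with whatever value $v$ is returned by $\R$, writing $\R v$. If $v\ne$ the value of $\X$, pick $\Yb:=\W(v)$ if it completes, or $\Yb:=\R v$ itself otherwise. Since $\R v\not\opmc\Yb$ is needed, we use $\Yb=\W(v)$ when $\R v$ started after it, and otherwise the simpler route: take $\Yb=\W(b)$ and argue directly with \Cref{thm:reordering}. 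Because $\X\not\opmc_{r'}\W(b)$, there is $r''\loc r'$ with $\W(b)<_{r''}\X<_{r''}\R$. By \Cref{lem:abc}, $\R$ cannot return $b$; yet by \Cref{lem: W_to_R} and real-time order it must return the value of the last write linearized before it. Comparing the linearizations of $r'$ and $r''$, in which $\R$ returns the same value, gives the contradiction, once we ensure no other pending writes exist. The cleanest fix is to pick the time at which $\X$ completes. If other writes are pending, we argue, as in \Cref{thm:isolation}, that pending operations can be delayed by \Cref{thm:paslemma2.1}. I expect the bookkeeping around pending concurrent operations to be the secondary obstacle.
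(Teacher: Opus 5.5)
Your Steps~1--2 follow the paper's construction in spirit: crash the observers right after $t_{\X.e}$, then write a fresh value and read. Declaring the messages in transit from $S$ lost, and proving the inductive frontier claim, is more careful than the paper, which leaves this implicit.

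The genuine gap is Step~3: the contradiction is never derived. You invoke $\W(b)$ already at $t_{\X.e}+1$, while operations concurrent with $\X$ may still be running (and $j$ itself may still be busy). So another write may be linearized between $\W(b)$ and $\R$, and $\R$ need not return $b$. None of your workarounds repairs this:
\begin{itemize}
\item Taking $\Yb:=\R v$ cannot work. Since $\R.s\mc\R.e$ by clause~(1a), the hypothesis $\R b\not\opmc\Yb$ of \Cref{thm:winningread} always fails for $\Yb=\R b$.
\item The ``simpler route'' via \Cref{thm:reordering} and \Cref{lem:abc} only shows that $\R$ does not return $b$ in $r''$, hence not in $r'$. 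This is perfectly consistent when $\R$ returns the value of another pending write, so it is no contradiction.
\item ``Delaying pending operations by \Cref{thm:paslemma2.1}'' is not an argument. Shifting events in time cannot remove a write that stays pending forever at a crashed process.
\end{itemize}

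The paper closes this step by letting all operations concurrent with $\X$ complete first. Only then does it run $\W(v)$ in isolation followed by a read, which must return $v$. It then applies \Cref{thm:winningread} with $\R b:=$ that read, $\Yb:=\W(v)$ and $\X c:=\X$; the hypothesis $\R b\not\opmc\Yb$ holds because the read starts after $\W(v)$ ends. This yields $\X\opmc_{r'}\W(v)$, contradicting your Step~2.

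Your worry is nonetheless legitimate. Writes pending at time $t_{\X.e}$ at crashed observers other than $i$ can never complete, and the paper's construction glosses over this case. Here is a fix that handles everything:
\begin{enumerate}
\item After $\W(b)$ completes, invoke reads $\R_1<\R_2<\dots<\R_{K+2}$ one after another at $j$, invoking nothing else. Here $K\le n$ bounds the number of writes not completed before $\R_1$ starts.
\item Only those $K$ writes can be linearized between consecutive reads. By pigeonhole, some consecutive pair $\R_m<\R_{m+1}$ returns the same value $u$.
\item Apply \Cref{lem:abc} to $\X c<_{r'}\W(b)<_{r'}\R_m u$. Since $b\neq c$, this gives $u\neq c$.
\item $\R_{m+1}$ starts after $\R_m$ ends, so $\R_{m+1}\not\opmc\R_m$. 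Apply \Cref{thm:winningread} with $\R b:=\R_{m+1}$, $\Yb:=\R_m$ and $\X c:=\X$. This yields $\X\opmc_{r'}\R_m$, again contradicting Step~2.
\end{enumerate}
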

\begin{proof}
    Assume, by way of contradiction, that no more than~$f$ processes observe~$\X$ in~$r$. 
    Let~$r'$ be a run of~$P$ that coincides with~$r$ up to time~$t_{\X.e}$, in which all processes that have observed~$\X$ fail from round $t_{\X.e}+1$ (and no other process crashes), in which all operations that are concurrent with~$\X$ complete and, after they do, a write operation $\W(v)$ (for a value~$v$ not previously written)  runs in isolation, followed by a completed read. 
Since all processes that observed~$\X$ in~$r'$ crash before $\W(v)$ is invoked, $\X\not\opmc_{r'}\W(v)$. 
 The read returns~$v$, and so \Cref{thm:winningread} implies that $\X\opmc_{r'}\W(v)$, contradiction. 
\end{proof}

We can now show that in $f$-resilient \larp's, every operation must perform at least one round-trip communication to all members of a quorum set of size at least~$f$. Formally: 

\begin{theorem}\label{thm: witnesses}
         Let~$P$ be an $f$-resilient \larp, and let~$\X$ be a completed  operation in a run~$r$  of~$P$. Then~$r$ must contain more than~$f$ witnesses for~$\X$. 
\end{theorem}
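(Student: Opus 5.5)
Assume, towards a contradiction, that at most $f$ processes are witnesses for~$\X$ in~$r$. The plan is to mirror the proof of \Cref{lem: observers}, but to crash the witnesses at a well-chosen \emph{earlier} point rather than right after $\X.e$. To find that point, we first use \Cref{thm:paslemma2.1} to produce a run $r'\loc r$ in which the processes that are not witnesses cannot affect $\X$'s completion, and then argue that crashing the witnesses forces a violation of \Cref{thm:winningread}.

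\textbf{Step 1 (structure of witnesses).} Let $S$ be the set of witnesses for~$\X$, where $|S|\le f$. Take any process $p\notin S$ and any node $\node{p,t}$ with $\X.s\mc_r\node{p,t}$. Then $\node{p,t}\not\mc_r\X.e$, since otherwise concatenating the two chains would make $p$ a witness. Hence every node that lies both in the causal future of $\X.s$ and in $\past_r(\X.e)$ belongs to a process of~$S$ (together with $\X$'s own process, which is itself a witness).

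\textbf{Step 2 (delaying the non-witnesses).} Apply \Cref{thm:paslemma2.1} with pivot $\theta=\X.e$ and a large $\Delta$, obtaining $r'\loc r$. For each $p\notin S$, every node of~$p$ after $\X.s$ is outside $\past_r(\X.e)$ by Step~1, so $t_p\le t_{\X.s}$ (or earlier). In $r'$, every such $p$ is therefore inactive throughout $[t_p+1,t_p+\Delta]$, an interval that contains $t_{\X.e}$ once $\Delta$ is large enough. Consequently, in $r'$ no process outside $S$ receives a message chain from $\X.s$ by time $t_{\X.e}$. In other words, at most $|S|\le f$ processes observe $\X$ in~$r'$.

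\textbf{Step 3 (contradiction).} By \Cref{lem: loc_eq}(ii), $r'$ is a run of~$P$. By \Cref{lem: loc_eq}(iii), its faulty processes are those of~$r$; we need $\X$ completed in $r'$ with fewer than $f+1$ observers. Since $r'\loc r$, $\X$ completes in $r'$. Applying \Cref{lem: observers} to~$r'$ then yields the required contradiction.

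The main obstacle is a legality issue with \Cref{lem: observers}. That lemma assumes $r$ is a run of an $f$-resilient \larp{} and constructs an extension in which the observers crash. This requires the prefix of $r'$ up to $t_{\X.e}$ to extend to a run with at most $f$ crashes, which is automatic because prefixes can be continued arbitrarily by the environment. The delicate point is checking that the observers in $r'$ are exactly (a subset of)~$S$. This needs careful accounting of $t_p$ for $p\notin S$, including processes whose nodes before $\X.s$ lie in $\past_r(\X.e)$: their shift starts at some $t_p\ge t_{\X.s}$ that is still below every time at which they are reached from $\X.s$. I would handle it by showing directly that any chain from $\X.s$ reaching $p$ in $r'$ must arrive after time $t_p+\Delta>t_{\X.e}$.
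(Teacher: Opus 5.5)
Your proposal is essentially the paper's proof: apply \Cref{thm:paslemma2.1} with pivot $\X.e$, show that every observer of $\X$ in $r'$ is a witness, and contradict \Cref{lem: observers} applied to $r'$ (your choice $\Delta\ge t_{\X.e}$, which freezes every process at time $t_{\X.e}$, is a harmless variant of the paper's $\Delta=t_{\X.e}-t_{\X.s}+1$). Also, your opening talk of crashing the witnesses at an earlier point is never used, and your Step~3 does not need it.

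One correction is needed. The Step~2 inference $t_p\le t_{\X.s}$ is false: a non-witness $p$ can send a message into $\past_r(\X.e)$ after $t_{\X.s}$ without having heard from $\X.s$, so $t_p$ can be larger than $t_{\X.s}$. The correct claim is the one in your last paragraph, namely that $t_p$ is strictly smaller than the first time $p$ is reached from $\X.s$. Step~1 together with downward closure of $\past_r(\X.e)$ only gives the weak inequality $\ge t_p$. Strictness holds because, when $t_p>0$, the fact that $\node{p,t_p-1}\in\past_r(\X.e)$ forces $p$ to send, and hence not receive, in round $t_p$. Therefore the frozen state $r_p(t_p)$ that $p$ has at time $t_{\X.e}$ in $r'$ contains no chain from $\X.s$.
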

\begin{proof}
 Assume by way of contradiction that there is a run~$r$ of~$P$ that contains $\le f$ witnesses for~$\X$. 
 Notice that for every witness~$p$ for~$\X$ in~$r$ there must be a node $\node{p,t}\mc_r\X.e$. 
 We apply \Cref{thm:paslemma2.1} to~$r$ with pivot $\X.e$ and delay $\Delta=t_{\X.e}-t_{\X.s}+1$, to obtain a run $r'\loc r$. By \Cref{lem: loc_eq}(iii) the run~$r'$ is a run of~$P$.  
 By choice of $\Delta$, only processes with nodes in $\past_{r'}(\X.e)$ can observe $\X$ in $r'$, so every observer of~$\X$ must be a witness for~$\X$. 
 By construction $\past_r(\X.e)=\past_{r'}(\X.e)$, and so there are no more than~$f$ witnesses for~$\X$ in~$r'$. 
 It follows that no more than~$f$ processes observe~$\X$ in~$r'$, contradicting \Cref{lem: observers}
.\end{proof}
The ABD algorithm requires the number of processes to satisfy $n\geq 2f+1$ \cite{ABD}. This ensures that  every two sets of $n-f$ processes intersect in at least one process, i.e., each operation communicates with a quorum set. We remark that although \Cref{thm: witnesses} implies the need to communicate with quorum sets, the Write-Back round is not {\em always} necessary. If a  reader of~$v$ receives message chains from all processes that are in a quorum set that $\W(v)$ communicated with in the first round, then the message chains of \Cref{lem: W_to_R} can be  guaranteed without the Write-Back. The algorithm of \cite{fastAtomicRegister04} is based on this type of observation. In addition, strengthening the results of \cite{naserpastoriza2023}, our work implies that the  channels that are shown to exist in \cite{naserpastoriza2023} must in fact be used to interact with quorums.

\bibliography{ref.bib,ref3.bib, z2_DISC23.bib, z1.bib}
\end{document}